
\documentclass[]{article} 

\usepackage[margin=3cm]{geometry}
\usepackage[english]{babel}
\hyphenation{}
\usepackage[latin1]{inputenc}
\usepackage{indentfirst}
\usepackage{amsmath, amssymb, mathrsfs, dsfont, mathtools, empheq}
\usepackage{amsthm}
\usepackage{graphicx}
\usepackage{comment}

\theoremstyle{plain}
\newtheorem{thm}{Theorem} [section]
\newtheorem{theorem}[thm]{Theorem}
\newtheorem{lemma}[thm]{Lemma}
\newtheorem{proposition}[thm]{Proposition}
\newtheorem{corollary}[thm]{Corollary}

\theoremstyle{definition}
\newtheorem{definition}[thm]{Definition}
\newtheorem{remark}[thm]{Remark}

\def\be{\begin{equation}}
\def\ee{\end{equation}}
\def\bc{\begin{center}}
\def\ec{\end{center}}

\begin{document}
\markboth{A. Barra, F. Guerra, E. Mingione}{Philosophical Magazine Letters}

\title{{\itshape Interpolating the Sherrington-Kirkpatrick replica trick}
}
\author{Adriano Barra$^{\rm a}$$^{\ast},$\thanks{$^\ast$Corresponding author. Email: adriano.barra@roma1.infn.it
\vspace{6pt}} Francesco Guerra$^{\rm a b}$ and Emanuele Mingione$^{\rm a}$  \\ \vspace{6pt}  $^{\rm a}${\em{Dipartimento di Fisica, Sapienza Universit$\grave{a}$ di Roma, Piazzale Aldo Moro 5, 00185}}\\
$^{\rm b}${\em {Istituto Nazionale di Fisica Nucleare, Sezione di Roma}}\\
{\em{ Roma, Italy}}
}

\maketitle

\begin{abstract}
The interpolation techniques have become, in the past decades, a powerful approach
to lighten several properties of spin glasses within a simple mathematical framework.
Intrinsically, for their construction, these schemes were naturally implemented into the cavity field technique, or its variants as the stochastic stability or the random overlap structures.
\newline
However the first and most famous approach to mean field statistical mechanics with quenched disorder is the replica trick.
\newline
Among the models where these methods have been used (namely, dealing with frustration and complexity),
probably the best known is the Sherrington-Kirkpatrick spin glass:
\newline
In this paper we are pleased to apply the interpolation scheme
to the replica trick framework and test it directly to the cited paradigmatic model: interestingly this allows to obtain easily the replica-symmetric control and, synergically with the broken replica bounds, a description of the full RSB scenario, both coupled with several minor theorems. Furthermore, by treating the amount of replicas $n\in(0,1]$ as an interpolating parameter (far from its original interpretation) this can be though of as a quenching temperature close to the one introduce in off-equilibrium approaches and, within this viewpoint, the proof of the attended commutativity of the zero replica and the infinite volume limits can be obtained.\\

\textbf{Keywords}: Cavity Method, Spin Glasses, Replica Trick.

\end{abstract}

\section{Introduction}

Born as a sideline in the condensed matter division of modern theoretical physics,
spin glasses became soon the "harmonic oscillators"\footnote{We learn this beautiful metaphor by Ton Coolen,
that we thank.} of the new paradigm of complexity: hundreds -if not thousands- of papers developed from (and on)
this seminal model. Frustration, replica symmetry breaking, rough valleys of free energy, slow relaxational dynamics,
aging and rejuvenation (and much more) paved the mathematical and physical strands of a new approach to Nature,
where the protagonists are no longer the subjects by themselves but mainly the ways they interact. As a result, complex statistical
mechanics is invading areas far beyond condensed matter physics,
ranging from  biology (e.g. neurology  \cite{amit,hopf1,ton} and immunology  \cite{immuno,parisi})
to human sciences (e.g. sociology  \cite{CG,socio} or economics  \cite{bouchaud,coolen}) and much more (see  \cite{science} for instance).
\newline
Despite a crucial role has been played surely by the underlying graph theory
(due to breakthroughs obtained even there, i.e. with the understanding of the small worlds  \cite{watts}
or the scale free networks  \cite{barabasi}),
we would like to confer to the Sherrington-Kirkpatrick model -SK from now on- (or its concrete variants on graphs, as the Viana-Bray model  \cite{VB,GT2} just to cite one) a crucial role in this new science of complexity.
\newline
Among the methods developed for solving its thermodynamics  \cite{bovierbook,challenge}, the interpolation techniques, even though not yet so strong to solve the problem in fully autonomy, covered soon a key role
to -at least- lighten several properties of this system, working as a synergic alternative to the replica trick \cite{sk,sk2,MPV}, which is actually the first and most famous approach to mean field statistical mechanics with quenched disorder:
In fact, the interpolation scheme has been "naturally" implemented into the cavity field technique \cite{barra1,CLT,quadratic}, or its variants as the stochastic stability \cite{hopf1,contucci,ac} or the random overlap structures \cite{ass,arguin2}.
\newline
In this paper we want to study this model by extending the interpolating scheme, from the original cavity perspective to the replica trick: To allow this procedure we completely forget the original role played by the "amount" of replicas in the replica trick (tuned by a parameter $n\in(0,1]$) and think of it directly as a real  interpolating parameter. Interestingly this can intuitively though of as a quenching parameter coherently with its counterpart in the glassy dynamics (i.e. FDT violations  \cite{leticia1} \cite{leticia2}).  At first, once the mathematical strategy has been introduced in complete generality, we use it to
obtain a clear picture of the infinite volume and the zero replica limits at the replica symmetric level (by which
the whole original SK theory is reproduced), then, within the Parisi full replica symmetry breaking scenario, coupled with the broken replica bounds  \cite{g3}, other robustness
properties dealing with the exchange of these two limits are achieved as well.
\newline
The paper is therefore structured as follows:
\newline
In the next Section, $2$, we briefly introduce the model (and the ideas behind the replica trick strategy) while in Section $3$ we outline the strategy we want to apply to the model. All the other sections are then left to
the implementation of the interpolation into this framework and for presenting the consequent results.

\section{The Sherrington-Kirkpatrick mean field spin glass}

\subsection{The model and its related definitions}

The generic configuration of the Sherrington-Kirkpatrick model
 \cite{sk, sk2} is determined by the $N$ Ising variables
$\sigma_i=\pm1$, $i=1,2,\ldots,N$. The Hamiltonian of the model,
in some external magnetic field $h$, is
\begin{equation}
\label{SK} H_N(\sigma,h;J)=-\frac1{\sqrt N} \sum_{1 \leq i < j
\leq N} J_{ij} \sigma_i\sigma_j- h\sum_{1 \leq i \leq N} \sigma_i.
\end{equation}
The first term in (\ref{SK}) is a long range random two body
interaction, while the second represents the interaction of the
spins with the magnetic field $h$. The external quenched disorder
is given by the $N(N-1)/2$ independent and identically distributed
random variables $J_{ij}$, defined for each pair of sites. For the
sake of simplicity, denoting the average over this disorder by
$\mathbb{E}$, we assume each $J_{ij}$ to be a centered unit
Gaussian with averages
$$\mathbb{E}(J_{ij})=0,\quad \mathbb{E}(J_{ij}^2)=1.$$

For a given inverse temperature\footnote{Here and in the
following, we set the Boltzmann constant $k_{\rm B}$ equal to one,
so that $\beta=1/(k_{\rm B} T)=1/T$.} $\beta$, we introduce the
disorder dependent partition function $Z_{N}(\beta,h;J)$, the
quenched average of the free energy per site $f_{N}(\beta,h)$, the
associated averaged normalized log-partition function
$\alpha_N(\beta,h)$, and the disorder dependent Boltzmann-Gibbs
state $\omega$, according to the definitions
\begin{eqnarray}
\label{Z}
Z_N(\beta,h;J)&=&\sum_{\sigma}\exp(-\beta H_N(\sigma,h; J)),\\
\label{f}
-\beta f_N(\beta,h)&=& N^{-1} \mathbb{E}\ln Z_N(\beta,h)=\alpha_N(\beta,h),\\
\label{state}
\omega(A)&=&Z_N(\beta,h;J)^{-1}\sum_{\sigma}A(\sigma)\exp(-\beta
H_N(\sigma,h;J)),
\end{eqnarray}
where $A$ is a generic smooth function of $\sigma$.

Let us now introduce the important concept of replicas. We consider a
generic number $n$ of independent copies of the system,
characterized by the spin configurations $\sigma^{(1)}, \ldots ,
\sigma^{(n)}$, distributed according to the product state
$$
\Omega=\omega^{(1)} \times \omega^{(2)} \times \dots \times
 \omega^{(n)},
$$
where each $\omega^{(\alpha)}$ acts on the corresponding
$\sigma^{(\alpha)}_i$ variables, and all are subject to the {\em
same} sample $J$ of the external disorder.
\newline
The overlap  between two replicas $a,b$ is defined according to
\begin{equation}
\label{overlap} q_{ab}(\sigma^{(a)},\sigma^{(b)})={1\over N}
\sum_{1 \leq i \leq N}\sigma^{(a)}_i\sigma^{(b)}_i,
\end{equation}
and satisfies the obvious bounds $-1\le q_{ab}\le 1$.
\newline
For a generic smooth function $A$ of the spin configurations on
the $n$ replicas, we define the average $\langle A \rangle$ as
\begin{equation}\label{medie}
\langle A \rangle = \mathbb{E}\Omega
A\left(\sigma^{(1)},\sigma^{(2)},\ldots,\sigma^{(n)}\right),
\end{equation}
where the Boltzmann-Gibbs average $\Omega$ acts on the replicated
$\sigma$ variables and $\mathbb{E}$ denotes, as usual, the average
with respect to the quenched disorder $J$.

\subsection{The replica trick in a nutshell}

The replica trick consists in evaluating the logarithm of the
partition function through its power expansion, namely \be \log Z =
\lim_{n \to 0}\frac{Z^n-1}{n} \Rightarrow \langle \log Z \rangle =
\lim_{n \to 0}\frac{\langle Z^n \rangle - 1}{n}= \lim_{n \to
0}\frac1n \log\langle Z^n \rangle, \ee such that the (intensive) free energy can be written as
\be f_N(\beta,h) = \lim_{n \to 0} f_N(n,\beta,h),\ee
where  $f_N(n,\beta,h)$ is defined through
\be -\beta f_N(n,\beta,h) =  \alpha_N(n,\beta,h) =  \frac{1}{ N n } \log\langle Z^n \rangle.\ee
By assuming the validity of the following commutativity of the $n,N$ limits
\begin{equation}
\lim_{N \to \infty}\lim_{n \to 0}\alpha_N(n,\beta,h)=\lim_{n \to 0}\lim_{N \to \infty}\alpha_N(n,\beta,h)
\end{equation}
both Sherrington-Kirkpatrick (at the replica symmetric level  \cite{sk,sk2}) and Parisi (within the full RSB scenario
 \cite{parisi2,parisi3,parisi4}) gave a clear picture of the thermodynamics, which can be streamlined as follows:
At the replica symmetric level (i.e. by assuming replica equivalence, namely $q_{ab} = q$ for $a \neq b$, $1$ otherwise) we get
\be \alpha_{SK}(\beta)=\min_q\{\alpha(\beta,h,q)\}, \ee
where the trial function $\alpha(\beta,h,q)$ is defined as \be
\alpha(\beta,h,q)=\log2+\int d\mu(z)\log\cosh\Big(\beta (\sqrt{q}z + h) \Big)+\frac{\beta^2}{4}(1-q)^2.\ee
The selfconsistency relation for q reads off as
\be q_{SK}=\int d\mu(z)\tanh^2\Big(\beta(\sqrt{q_{SK}}z+h)\Big)).\ee
At the broken replica level we can write
\be \lim_{N \to \infty} \frac{1}{ N}\mathbb{E}\log{Z_{N}(\beta,J,h)}=\alpha(\beta,h)=-\beta f(\beta,h)=\alpha_P(\beta,h),\ee
where $\alpha_P(\beta,h)$, the fully broken replica solution, is defined as follows:
Let us consider the functional
\be \alpha_P(\beta,h,x)=\log 2 +f(0,y;x,\beta)\mid_{y=h}-\frac{\beta^2}{2}\int^1_0qx(q)dq,\ee
where $f(q,y;x,\beta)\equiv f(q,y)$ is solution of the equation
\be \partial_qf+\frac{1}{2}\partial^2_yf+\frac{1}{2}x(q)(\partial_yf)^2=0,\ee
with boundary $f(1,y)=\log\cosh(\beta y)$.  Then
\be \alpha_P(\beta,h)=\inf_{x\in{\cal X}}\alpha_P(\beta,h,x),\ee
where ${\cal X}$ is the convex space of the piecewise constant functions as introduced for instance in  \cite{g3}.

\section{The interpolating framework for the replica trick}

In this Section we present our strategy of investigation; namely we show some Theorems and Propositions
 whose implications will be exploited in the next Sections. For the sake of clearness we will omit
some straightforward demonstrations.
\newline
We want to think at the mapping among the one-replica and zero-replica as an interpolation scheme,
by the introduction of an auxiliary interpolating function, that we call $n$-quenched free energy, which (non trivially) bridges the system among $n=1$ and $n=0$, as
\begin{equation}
\varphi_N(n,\beta,h)= \frac{1}{Nn}\log{\mathbb{E}(Z^n_{N}(\beta,J,h))},
\end{equation}
where, for the sake of clearness $Z^n_{N}(\beta,J,h)\equiv{(Z_{N}(\beta,J,h))^n}$.
\newline
It is then worth stressing the next
\begin{theorem}\label{teorema1}
The following relation, among the interpolating function and the free energy, holds
\begin{equation}
\lim_{n \to 0}\varphi_N(n,\beta,h)=\alpha_N(\beta,h),
\end{equation}
furthermore
\begin{equation}
\varphi_N(n,\beta,h)\geq{\alpha_N(\beta,h)}
\end{equation}
for any  $n$.
\end{theorem}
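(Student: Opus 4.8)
\emph{Proof proposal.} The plan is to treat the two assertions independently: the limit $n\to0$ is a one-line application of de l'H\^{o}pital's rule (equivalently, of the first-order expansion $Z_N^n=1+n\log Z_N+o(n)$) once differentiation under the quenched average is justified, and the inequality is nothing but Jensen's inequality for the concave logarithm.

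First I would record the elementary a priori bounds that make all the manipulations legitimate. Since $Z_N(\beta,h;J)=\sum_\sigma\exp(-\beta H_N(\sigma,h;J))>0$, the powers $Z_N^n$ are well defined for every real $n$, and from $\exp(-\beta\min_\sigma H_N)\le Z_N\le 2^N\exp(\beta\max_\sigma|H_N|)$ one gets $|\log Z_N|\le N\log 2+\beta\max_\sigma|H_N|$ together with $\mathbb{E}(Z_N^n)<\infty$ for $n\in(0,1]$; because $\max_\sigma|H_N|$ is dominated by $N^{-1/2}\sum_{i<j}|J_{ij}|+hN$, it has finite exponential moments of every order, whence $\mathbb{E}|\log Z_N|^k<\infty$ and $\mathbb{E}(Z_N^2)<\infty$. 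Next, on $(0,1]$ the map $n\mapsto\mathbb{E}(Z_N^n)$ is differentiable with $\tfrac{d}{dn}\mathbb{E}(Z_N^n)=\mathbb{E}(Z_N^n\log Z_N)$: the interchange of $\tfrac{d}{dn}$ and $\mathbb{E}$ is allowed since $|Z_N^n\log Z_N|\le Z_N^2+|\log Z_N|$ uniformly for $n\in(0,1]$, an integrable majorant. Using $\mathbb{E}(Z_N^0)=1$ and the fact that, by dominated convergence, $\mathbb{E}(Z_N^n\log Z_N)\to\mathbb{E}\log Z_N$ as $n\to0^+$, de l'H\^{o}pital's rule gives
\be
\lim_{n\to0}\frac{\log\mathbb{E}(Z_N^n)}{n}=\lim_{n\to0}\frac{\mathbb{E}(Z_N^n\log Z_N)}{\mathbb{E}(Z_N^n)}=\mathbb{E}\log Z_N,
\ee
and dividing by $N$ yields $\lim_{n\to0}\varphi_N(n,\beta,h)=N^{-1}\mathbb{E}\log Z_N=\alpha_N(\beta,h)$.

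For the inequality, fix $n\in(0,1]$. Since $Nn>0$, multiplying the claimed bound $\varphi_N(n,\beta,h)\ge\alpha_N(\beta,h)$ by $Nn$ shows it is equivalent to $\log\mathbb{E}(Z_N^n)\ge n\,\mathbb{E}\log Z_N=\mathbb{E}\log(Z_N^n)$, which is precisely Jensen's inequality $\log\mathbb{E}X\ge\mathbb{E}\log X$ for the positive random variable $X=Z_N^n$. The same computation shows that for $n<0$ the inequality would be reversed, so the restriction $n\in(0,1]$ (the regime of interest here) is exactly what fixes the stated direction, and strict concavity of $\log$ together with the non-degeneracy of $Z_N^n$ for $\beta>0$ makes the inequality strict except at $n=0$, where it reduces to the equality just proved.

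The argument is essentially elementary; the only point deserving care — the ``main obstacle'', such as it is — is producing the uniform integrable majorant for $Z_N^n\log Z_N$ on $n\in(0,1]$ that simultaneously legitimizes the differentiation under $\mathbb{E}$ and the passage to the limit $n\to0^+$. Everything else reduces to one line of de l'H\^{o}pital and one line of Jensen.
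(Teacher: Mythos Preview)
Your proof is correct and follows essentially the same route as the paper: the limit $n\to0$ is obtained by a first-order expansion of $\log\mathbb{E}(Z_N^n)$ (the paper phrases it as a Taylor expansion, you as de l'H\^{o}pital, which are equivalent here), and the inequality is Jensen applied to the concave logarithm. The paper's proof is two lines and omits exactly the technical point you flag as the ``main obstacle'' (the integrable majorant justifying differentiation under $\mathbb{E}$), so your version is a strictly more careful rendering of the same argument.
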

\begin{proof} We can  expand in Taylor series in $n\in [0,1]$ to get
\begin{eqnarray}\nonumber
\log{\mathbb{E}(Z^n_{N}(\beta,J,h))} &=& 0+\mathbb{E}(\log{Z_{N}(\beta,J,h)})n+o(n^2)\Rightarrow \\
\lim_{n \to 0^+}\varphi_N(n,\beta,h) &=&\lim_{n \to 0}\frac{1}{Nn}(\mathbb{E}(\log{Z_{N}(\beta,J,h)})n+o(n^2))=\alpha_N(\beta,h).
\end{eqnarray}
The Jensen inequality ensures the second statement of the Theorem.
\end{proof}
\begin{proposition}\label{proposition1}
Through Theorem \ref{teorema1} we immediately obtain
\begin{equation}
\lim_{N \to \infty}\lim_{n \to 0}\varphi_N(n,\beta,h)=\alpha(\beta,h).
\end{equation}
\end{proposition}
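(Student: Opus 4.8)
The plan is simply to iterate the two statements of Theorem~\ref{teorema1} together with the known existence of the thermodynamic limit for the SK pressure. First, for each fixed $N$, the first identity of Theorem~\ref{teorema1} gives $\lim_{n\to 0}\varphi_N(n,\beta,h)=\alpha_N(\beta,h)$, so the inner limit is carried out exactly and the claim reduces to the assertion $\lim_{N\to\infty}\alpha_N(\beta,h)=\alpha(\beta,h)$.

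Next I would invoke the existence of the infinite-volume limit of $\alpha_N$: by the Guerra--Toninelli superadditivity argument the sequence $N\alpha_N(\beta,h)=\mathbb{E}\log Z_N(\beta,h)$ is superadditive up to negligible boundary corrections, hence $\alpha_N(\beta,h)$ converges as $N\to\infty$, and its limit is precisely the quantity $\alpha(\beta,h)$ recalled in Section~2 (given by $\alpha_P(\beta,h)$ in the broken-replica picture, or by $\alpha_{SK}(\beta)$ in the replica-symmetric phase). Composing the two steps then yields $\lim_{N\to\infty}\lim_{n\to 0}\varphi_N(n,\beta,h)=\alpha(\beta,h)$, which is the assertion of the Proposition.

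There is essentially no obstacle in this particular argument: once the $n\to 0$ limit is taken at finite $N$, the $o(n^2)$ remainder appearing in the proof of Theorem~\ref{teorema1} plays no further role, and no uniformity in $N$ is required since $N$ is frozen while the inner limit is computed. The point worth emphasising — and the reason this Proposition is isolated — is precisely that it does \emph{not} claim the commutation $\lim_{N\to\infty}\lim_{n\to 0}=\lim_{n\to 0}\lim_{N\to\infty}$; establishing that exchange is the substantive programme announced in the introduction, and it will instead demand the bounds produced by the $n$-interpolation combined with the broken-replica estimates of \cite{g3}.
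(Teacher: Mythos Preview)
Your argument is correct and is precisely what the paper has in mind: the Proposition is stated without proof because it follows ``immediately'' from Theorem~\ref{teorema1} by first taking $n\to 0$ at fixed $N$ to obtain $\alpha_N(\beta,h)$, and then invoking the known existence of the thermodynamic limit $\alpha_N\to\alpha$. Your closing remark correctly identifies why this is isolated as a separate statement: the nontrivial direction is the exchanged order of limits, treated later in Section~6.
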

We want to deepen now the properties of $\varphi_N(n,\beta,h)$ following the strategy outlined in  \cite{guerraspinglasses}:
\begin{proposition}\label{trepuntotre}
Let $i\in Q={\{1,...,N\}}$. Introduce positive weights $\forall i\longrightarrow{w_i}\in{\mathbb{R^+}}$. Let  $\forall i\longrightarrow{U_i}$ be a family of Gaussian random variables such that
$\mathbb{E}(U_i)=0$ and $\mathbb{E}({U_i}{U_j})=S_{ij}$, where $S_{ij}$ is a positive defined symmetric matrix.
\newline
For the functional $\varphi(n,t)=n^{-1}\log{\mathbb{E}(Z^n_t)}$,
where $Z_t=\sum_{i}w_i\exp({\sqrt{t}U_i})$, the following relation holds
\begin{equation}
\frac{d}{dt}\varphi(n,t)=\frac{1}{2}{\langle  S_{ii}\rangle }_n+\frac{(n-1)}{2}{\langle  S_{ij}\rangle }_n,
\end{equation}
\end{proposition}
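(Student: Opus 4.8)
The plan is to differentiate $\varphi(n,t)=n^{-1}\log\mathbb{E}(Z_t^n)$ in $t$ and then collapse the $t$‑derivative of $Z_t$ by a Gaussian integration by parts. Writing $\dot Z_t\equiv\frac{d}{dt}Z_t=\frac{1}{2\sqrt t}\sum_i w_i U_i e^{\sqrt t U_i}$, the chain rule gives
\begin{equation}
\frac{d}{dt}\varphi(n,t)=\frac{1}{n}\,\frac{\mathbb{E}\big(n Z_t^{n-1}\dot Z_t\big)}{\mathbb{E}(Z_t^n)}=\frac{\mathbb{E}\big(Z_t^{n-1}\dot Z_t\big)}{\mathbb{E}(Z_t^n)}=\frac{1}{2\sqrt t\,\mathbb{E}(Z_t^n)}\sum_i\mathbb{E}\big(U_i\, w_i e^{\sqrt t U_i}\,Z_t^{n-1}\big).
\end{equation}
Differentiating under the expectation, and the integration by parts used below, are legitimate because $Z_t>0$ and every random factor appearing is controlled by finite products of exponentials of the Gaussians $U_i$, which possess finite exponential moments uniformly for $t$ in a compact interval.

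Next I would apply, for each fixed $i$, the Wick formula $\mathbb{E}(U_i F(U))=\sum_j S_{ij}\,\mathbb{E}(\partial_{U_j}F)$ with $F=w_i e^{\sqrt t U_i} Z_t^{n-1}$. Since $\partial_{U_j}Z_t=\sqrt t\, w_j e^{\sqrt t U_j}$, differentiating $F$ produces exactly two terms: the \emph{diagonal} one, where $\partial_{U_j}$ hits $e^{\sqrt t U_i}$ and forces $j=i$, contributing $\sqrt t\, S_{ii}\,w_i e^{\sqrt t U_i}Z_t^{n-1}$; and the one where $\partial_{U_j}$ hits $Z_t^{n-1}$, contributing $\sqrt t\,(n-1)\sum_j S_{ij}\,w_i e^{\sqrt t U_i}\,w_j e^{\sqrt t U_j}\,Z_t^{n-2}$. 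Summing over $i$, the prefactor $\frac{1}{2\sqrt t}$ cancels against these $\sqrt t$'s, leaving
\begin{equation}
\frac{d}{dt}\varphi(n,t)=\frac{1}{2\,\mathbb{E}(Z_t^n)}\,\mathbb{E}\!\left(\sum_i S_{ii}\,\frac{w_i e^{\sqrt t U_i}}{Z_t}\,Z_t^{n}\right)+\frac{n-1}{2\,\mathbb{E}(Z_t^n)}\,\mathbb{E}\!\left(\sum_{ij}S_{ij}\,\frac{w_i e^{\sqrt t U_i}}{Z_t}\,\frac{w_j e^{\sqrt t U_j}}{Z_t}\,Z_t^{n}\right).
\end{equation}

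Finally I would read the result through the single-site weights $\omega_t(i):=w_i e^{\sqrt t U_i}/Z_t$, so that the two brackets above are precisely the replicated ``deformed'' averages, i.e. $\langle O\rangle_n=\mathbb{E}\big(Z_t^n\,\Omega(O)\big)/\mathbb{E}(Z_t^n)$ with $\Omega$ the product (replica) measure built from the $\omega_t$'s, the single index of $S_{ii}$ living on one replica and the two indices of $S_{ij}$ on two distinct replicas; this yields $\frac{d}{dt}\varphi(n,t)=\frac12\langle S_{ii}\rangle_n+\frac{n-1}{2}\langle S_{ij}\rangle_n$. I expect the only genuinely delicate point to be conceptual rather than computational: giving a clean meaning to $\langle\,\cdot\,\rangle_n$ (and to $Z_t^{n-1}$, $Z_t^{n-2}$) for non-integer $n\in(0,1]$, together with the justification of the $t$-differentiation down to $t=0$ (where the formula is first obtained for $t>0$ and then extended by continuity). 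The algebraic identity itself never uses integrality of $n$, since only the elementary rule $\partial_{U_j}Z_t^{n-1}=(n-1)Z_t^{n-2}\partial_{U_j}Z_t$ is invoked.
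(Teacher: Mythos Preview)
Your argument is correct: differentiate in $t$, apply the Gaussian integration by parts $\mathbb{E}(U_iF)=\sum_jS_{ij}\mathbb{E}(\partial_{U_j}F)$, and recognize the two resulting pieces as $\langle S_{ii}\rangle_n$ and $(n-1)\langle S_{ij}\rangle_n$ under the deformed state of Definition~1. The paper does not actually spell out a proof of this Proposition---it is one of the ``straightforward demonstrations'' explicitly omitted, with a pointer to \cite{guerraspinglasses}---and what you wrote is precisely the standard Guerra-type computation that reference contains, so there is nothing to contrast.
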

where we introduced the following
\begin{definition}
$ {\langle A \rangle}_n=\mathbb{E}\Big(Z^n_t \mathbb{E}(Z^n_t)^{-1}\Omega(A)\Big)$ is a deformed state on
the $2$-product Boltzmann one, namely
$$\Omega(A)=\sum_{i,j}^{N}(Z^{-1}_tw_i\exp{\sqrt{t}U_i})(Z^{-1}_t\omega_j\exp{\sqrt{t}U_j})A, $$
where $A$ is an observable on $Q \times Q$,
$$\omega(A)=\sum_{i}^N(Z^{-1}_tw_i\exp{\sqrt{t}U_i})A,$$
being $A\in\mathcal{A}(Q)$.
\end{definition}
The following generalization, considering two families of random variables, can be easily obtained.
\begin{proposition}\label{trepuntocinque}
Let $i\in{Q=\{1,...,N\}}$ be a probability space and $\forall i\longrightarrow{w_i}\in{\mathbb{R^+}}$
be a probability weight and $\forall i\longrightarrow{U_i}$ a family of random Gaussian variables such that
$\mathbb{E}(U_i)=0$ and $\mathbb{E}({U_i}{U_j})=S_{ij}$, where $S_{ij}$  is a positive defined symmetric matrix.
\newline
Let $\forall i\longrightarrow{\tilde{U}_i}$ another family of random Gaussian variables such that
$\mathbb{E}(\tilde{U}_i)=0$ and $\mathbb{E}({\tilde{U}_i}{\tilde{U}_j})=\tilde{S}_{ij}$,
where $S_{ij}$ is a positive defined symmetric matrix. Let us further consider the functional
$\varphi(n,t)=n^{-1}\log{\mathbb{E}(Z^n_t)}$ (where $Z_t=\sum_{i}w_i\exp{(\sqrt{t}U_i+\sqrt{1-t}\tilde{U}_i)}$):
the following relation holds
\begin{equation}
\frac{d}{dt}\varphi(n,t)=\frac{1}{2}{\langle S_{ii}-\tilde{S}_{ii}\rangle }_n+
\frac{(n-1)}{2}{\langle  S_{ij}-\tilde{S}_{ij}\rangle }_n.
\end{equation}
\end{proposition}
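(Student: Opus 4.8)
The plan is to compute $\frac{d}{dt}\varphi(n,t)$ directly, applying Gaussian integration by parts in turn to each of the two independent families $\{U_i\}$ and $\{\tilde U_i\}$; the argument is just the two-field version of the computation behind Proposition \ref{trepuntotre}. Abbreviate $g_i = g_i(t) = w_i\exp\!\big(\sqrt t\,U_i + \sqrt{1-t}\,\tilde U_i\big)$, so that $Z_t = \sum_i g_i$ and
\[
\frac{d}{dt}\varphi(n,t) \;=\; \frac{\mathbb{E}\!\big(Z_t^{\,n-1}\,\partial_t Z_t\big)}{\mathbb{E}(Z_t^{\,n})},
\qquad
\partial_t Z_t \;=\; \sum_i g_i\Big(\frac{U_i}{2\sqrt t} - \frac{\tilde U_i}{2\sqrt{1-t}}\Big).
\]
Thus $\frac{d}{dt}\varphi(n,t)$ splits into a $U$-contribution and a $\tilde U$-contribution, to be handled in parallel.

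For the $U$-part I would use $\mathbb{E}(U_i\,F) = \sum_j S_{ij}\,\mathbb{E}(\partial_{U_j}F)$ with $F = Z_t^{\,n-1}g_i$. Since $\partial_{U_j}Z_t = \sqrt t\,g_j$ and $\partial_{U_j}g_i = \sqrt t\,\delta_{ij}g_i$, the chain rule on $Z_t^{\,n-1}$ produces exactly two pieces: a contact piece carrying $\delta_{ij}$, and a bilinear piece carrying the prefactor $(n-1)$ together with the product $g_ig_j$. The explicit $1/(2\sqrt t)$ cancels the $\sqrt t$ emitted by these derivatives, so that $t=0$ and $t=1$ are only apparent singularities and the formula extends by continuity to $[0,1]$. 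Summing over $i,j$, dividing by $\mathbb{E}(Z_t^{\,n})$, and recognising $\sum_i (g_i/Z_t)\,S_{ii} = \omega(S_{ii})$ as a one-replica object while $\sum_{i,j}(g_ig_j/Z_t^{2})\,S_{ij} = \Omega(S_{ij})$ is a two-replica object, the $U$-contribution collapses to $\tfrac12\langle S_{ii}\rangle_n + \tfrac{n-1}{2}\langle S_{ij}\rangle_n$, with the deformation weight $Z_t^{\,n}/\mathbb{E}(Z_t^{\,n})$ furnished by the leftover factor $Z_t^{\,n}$.

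The $\tilde U$-part is the mirror image: replace $S$ by $\tilde S$ and $\sqrt t$ by $\sqrt{1-t}$, keep the overall minus sign, and obtain $-\tfrac12\langle\tilde S_{ii}\rangle_n - \tfrac{n-1}{2}\langle\tilde S_{ij}\rangle_n$. Adding the two contributions gives the asserted identity
\[
\frac{d}{dt}\varphi(n,t) \;=\; \frac12\,\langle S_{ii}-\tilde S_{ii}\rangle_n \;+\; \frac{n-1}{2}\,\langle S_{ij}-\tilde S_{ij}\rangle_n .
\]
I do not foresee a real obstacle: because the index set $Q$ is finite and the $w_i$ are strictly positive, $Z_t$ stays bounded away from $0$ on compact $t$-intervals and every function involved is dominated by a polynomial times a Gaussian, so interchanging $d/dt$ with $\mathbb{E}$ and invoking Wick's lemma are legitimate, while the independence of $\{U_i\}$ and $\{\tilde U_i\}$ is precisely what allows the integration by parts in one family with the other held fixed. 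The only care needed is in the bookkeeping that routes the $\delta_{ij}$ term to the one-replica deformed state $\langle\,\cdot\,\rangle_n$ and the surviving $g_ig_j$ term to the two-replica one, exactly as in Proposition \ref{trepuntotre}.
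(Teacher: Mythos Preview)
Your proposal is correct and follows exactly the route the paper has in mind: the paper does not spell out a proof of this proposition, merely remarking that it ``can be easily obtained'' as the two-family generalization of Proposition~\ref{trepuntotre}, and your direct computation via Gaussian integration by parts on each family separately is precisely that generalization. The bookkeeping you describe --- the $\delta_{ij}$ contact term feeding the one-replica average and the $(n-1)g_ig_j$ term feeding the two-replica average, with the $1/\sqrt{t}$ and $1/\sqrt{1-t}$ prefactors cancelling against the derivatives --- is the standard mechanism behind both propositions.
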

We can then formulate the following
\begin{theorem}
If $\forall(i,j)\in Q\times Q$, $S_{ii}=\tilde{S}_{ii}$ and $S_{ij}\geq\tilde{S}_{ij}$, the following relation holds
$$\varphi(n,1)\leq\varphi(n,0), \ \ \forall n\in (0,1].$$
\end{theorem}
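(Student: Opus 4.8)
The plan is to differentiate $\varphi(n,t)$ in $t$ by means of Proposition~\ref{trepuntocinque}, check that the resulting derivative is non-positive throughout $[0,1]$, and integrate. First I would apply that proposition to the present families, with $Z_t=\sum_i w_i\exp(\sqrt t\,U_i+\sqrt{1-t}\,\tilde U_i)$, obtaining
\be
\frac{d}{dt}\varphi(n,t)=\frac12\,{\langle S_{ii}-\tilde S_{ii}\rangle}_n+\frac{n-1}{2}\,{\langle S_{ij}-\tilde S_{ij}\rangle}_n .
\ee
The hypothesis $S_{ii}=\tilde S_{ii}$ makes the first term vanish identically in $t$, so the whole derivative collapses to $\frac{n-1}{2}{\langle S_{ij}-\tilde S_{ij}\rangle}_n$.

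It then remains to fix the sign of this term. Since $n\in(0,1]$ we have $n-1\le 0$, so it suffices to prove ${\langle S_{ij}-\tilde S_{ij}\rangle}_n\ge 0$. Here I would exploit the structure of the deformed state: for $n>0$ the quantities $Z_t^n$ and $\mathbb{E}(Z_t^n)$ are strictly positive, and the weights $Z_t^{-1}w_i\exp(\sqrt t\,U_i+\sqrt{1-t}\,\tilde U_i)$ building $\Omega$ on $Q\times Q$ are non-negative; hence ${\langle\,\cdot\,\rangle}_n=\mathbb{E}\big(Z_t^n\,\mathbb{E}(Z_t^n)^{-1}\Omega(\cdot)\big)$ is a positive linear functional, mapping any pointwise non-negative observable on $Q\times Q$ to a non-negative number. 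Applied to the observable $(i,j)\mapsto S_{ij}-\tilde S_{ij}$, which is $\ge 0$ by hypothesis (and equals $0$ on the diagonal $i=j$, so the diagonal contributions do no harm), this yields ${\langle S_{ij}-\tilde S_{ij}\rangle}_n\ge 0$, and therefore $\frac{d}{dt}\varphi(n,t)\le 0$ for every $t\in[0,1]$.

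Finally I would conclude via the fundamental theorem of calculus: Proposition~\ref{trepuntocinque} already furnishes an explicit, $t$-continuous expression for the derivative, so $\varphi(n,\cdot)$ is $C^1$ on $[0,1]$ and
\be
\varphi(n,1)-\varphi(n,0)=\int_0^1\frac{d}{dt}\varphi(n,t)\,dt\le 0 ,
\ee
which is exactly the claim; when $n=1$ both pieces of the derivative are zero and the bound degenerates to the equality $\varphi(1,1)=\varphi(1,0)$, as it should. The only genuinely delicate point — the one I would state most carefully — is the positivity of ${\langle\,\cdot\,\rangle}_n$, which hinges on $n$ being strictly positive so that $Z_t^n$ is a legitimate (positive) reweighting; once Proposition~\ref{trepuntocinque} is granted, everything else is sign-chasing, since all the Gaussian integration by parts has already been absorbed into that proposition.
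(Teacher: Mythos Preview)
Your proof is correct and follows essentially the same route as the paper: apply Proposition~\ref{trepuntocinque}, kill the diagonal term via $S_{ii}=\tilde S_{ii}$, and integrate the remaining non-positive piece $\frac{n-1}{2}{\langle S_{ij}-\tilde S_{ij}\rangle}_n$ over $t\in[0,1]$. You are in fact more explicit than the paper, which simply asserts the sign of the right-hand side without spelling out the positivity of the deformed state ${\langle\,\cdot\,\rangle}_n$.
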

\begin{proof} Integrating among $0,1$ the functional we get
$\varphi(n,1)-\varphi(n,0)=\frac{1}{2}(n-1)\int^1_0dt{\langle  S_{ij}-\tilde{S}_{ij}\rangle }_n $,
whose r.h.s. is $\leq 0$ for $n\in (0,1]$.
\newline
Obviously the following relation tacitely holds: $\lim_{n \to 0}{\langle \cdot \rangle }_n=\langle \cdot \rangle$.
\end{proof}

Focusing on the Sherrington-Kirkpatrick model, as earlier introduced, and by using the results of the previous
Section, we still think at the $n$-variation as an interpolation and we can state the following
\begin{theorem}\label{teorema3}
Let us consider the functional $\psi_N(n,\beta,h)= n^{-1}\log{\mathbb{E}(Z^n_N(\beta,J,h))}=N\varphi_N(n,\beta,h)$:
$\psi_N(n,\beta,h)$ is super-additive in $N$,  $\forall n\in (0,1]$.
Furthermore
$$\lim_{N \to \infty}\varphi_N(n,\beta,h)=\sup_N \varphi_N(n,\beta,h)=\varphi(n,\beta,h), \ \textit{for any $n$}.$$
\end{theorem}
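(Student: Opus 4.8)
The plan is to adapt the Guerra--Toninelli subadditivity argument to the replicated quantity $\mathbb{E}(Z_N^n)$, using Proposition~\ref{trepuntocinque} as the engine. Fix $N=N_1+N_2$, split the sites into two blocks $B_1,B_2$ of cardinalities $N_1,N_2$, and write $\sigma=(\sigma^{(1)},\sigma^{(2)})$ with $\sigma^{(\ell)}$ the restriction of $\sigma$ to $B_\ell$. To each configuration $\sigma$ I would attach the Gaussian variable $U_\sigma=\tfrac{\beta}{\sqrt N}\sum_{k<l}J_{kl}\sigma_k\sigma_l$ coming from the full SK coupling, the Gaussian variable $\tilde U_\sigma=\tfrac{\beta}{\sqrt{N_1}}\sum_{k<l\in B_1}J'_{kl}\sigma_k\sigma_l+\tfrac{\beta}{\sqrt{N_2}}\sum_{k<l\in B_2}J''_{kl}\sigma_k\sigma_l$ built from two \emph{fresh, mutually independent} standard Gaussian families $J',J''$, together with the $t$-independent weights $w_\sigma=\exp(\beta h\sum_k\sigma_k)$ (their overall normalization being immaterial for the derivative below). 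Applying Proposition~\ref{trepuntocinque} with the index set $Q$ taken to be the space of spin configurations and $Z_t=\sum_\sigma w_\sigma\exp(\sqrt t\,U_\sigma+\sqrt{1-t}\,\tilde U_\sigma)$, one has at the endpoints $Z_1=Z_N(\beta,h;J)$, hence $n^{-1}\log\mathbb{E}(Z_1^n)=\psi_N(n,\beta,h)$, while at $t=0$ both $w_\sigma$ and the exponential factorize over the blocks and $J',J''$ are independent, so $Z_0=Z_{N_1}Z_{N_2}$ and $n^{-1}\log\mathbb{E}(Z_0^n)=\psi_{N_1}(n,\beta,h)+\psi_{N_2}(n,\beta,h)$.

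Thus super-additivity $\psi_N\ge\psi_{N_1}+\psi_{N_2}$ will follow once $\tfrac{d}{dt}\big(n^{-1}\log\mathbb{E}(Z_t^n)\big)\ge0$ on $[0,1]$. Here I would compute the covariances explicitly: $S_{\sigma\tau}=\mathbb{E}(U_\sigma U_\tau)=\tfrac{\beta^2}{2}(Nq_{\sigma\tau}^2-1)$ and $\tilde S_{\sigma\tau}=\mathbb{E}(\tilde U_\sigma\tilde U_\tau)=\tfrac{\beta^2}{2}\big(N_1(q^{(1)}_{\sigma\tau})^2+N_2(q^{(2)}_{\sigma\tau})^2-2\big)$, where $q^{(\ell)}_{\sigma\tau}$ is the overlap of $\sigma,\tau$ restricted to $B_\ell$ and $q_{\sigma\tau}=\tfrac{N_1}{N}q^{(1)}_{\sigma\tau}+\tfrac{N_2}{N}q^{(2)}_{\sigma\tau}$. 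On the diagonal all overlaps equal $1$, so $S_{\sigma\sigma}-\tilde S_{\sigma\sigma}=\tfrac{\beta^2}{2}$; off the diagonal, convexity of $x\mapsto x^2$ (equivalently Cauchy--Schwarz) gives $Nq_{\sigma\tau}^2\le N_1(q^{(1)}_{\sigma\tau})^2+N_2(q^{(2)}_{\sigma\tau})^2$, so that in all cases
\be
S_{\sigma\tau}-\tilde S_{\sigma\tau}=\frac{\beta^2}{2}-\Delta_{\sigma\tau},\qquad
\Delta_{\sigma\tau}:=\frac{\beta^2}{2}\Big(N_1(q^{(1)}_{\sigma\tau})^2+N_2(q^{(2)}_{\sigma\tau})^2-Nq_{\sigma\tau}^2\Big)\ge0 ,
\ee
with $\Delta_{\sigma\sigma}=0$. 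Substituting into the formula of Proposition~\ref{trepuntocinque}, and using that $\langle\cdot\rangle_n$ is positivity-preserving and that the diagonal term is the $\langle\cdot\rangle_n$-average of a constant,
\be
\frac{d}{dt}\Big(n^{-1}\log\mathbb{E}(Z_t^n)\Big)=\frac12\cdot\frac{\beta^2}{2}+\frac{n-1}{2}\Big(\frac{\beta^2}{2}-\langle\Delta\rangle_n\Big)=\frac{n\beta^2}{4}+\frac{1-n}{2}\,\langle\Delta\rangle_n\ \ge\ 0
\ee
for every $n\in(0,1]$. Integrating over $t\in[0,1]$ yields $\psi_N(n,\beta,h)\ge\psi_{N_1}(n,\beta,h)+\psi_{N_2}(n,\beta,h)$.

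For the second assertion I would couple this super-additivity with an a priori upper bound: by Jensen (valid since $0<n\le1$), $\varphi_N(n,\beta,h)=\tfrac1{Nn}\log\mathbb{E}(Z_N^n)\le\tfrac1N\log\mathbb{E}(Z_N)$, and the explicit evaluation $\mathbb{E}(Z_N)=(2\cosh\beta h)^N\exp\!\big(\beta^2(N-1)/4\big)$ gives a bound uniform in $N$. A super-additive sequence that is bounded above satisfies Fekete's lemma, so $\lim_{N\to\infty}\varphi_N(n,\beta,h)=\sup_N\varphi_N(n,\beta,h)=\varphi(n,\beta,h)$ for every $n\in(0,1]$, which is the claim.

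The one point that needs care --- and which I expect to be the main obstacle --- is that the diagonal covariances do \emph{not} coincide ($S_{\sigma\sigma}\ne\tilde S_{\sigma\sigma}$, because the full system carries the $N_1N_2$ inter-block bonds that are absent from the decoupled pair), so the earlier Theorem whose hypothesis is $S_{ii}=\tilde S_{ii}$ cannot be invoked verbatim. One must instead run the sign bookkeeping of Proposition~\ref{trepuntocinque} directly, observing that the constant mismatch re-sums into the harmless nonnegative term $n\beta^2/4$, while the genuine interpolation contribution $\tfrac{1-n}{2}\langle\Delta\rangle_n$ is nonnegative precisely because $\langle\cdot\rangle_n$ is positive and $\Delta_{\sigma\tau}\ge0$ pointwise --- which is exactly where the convexity of $x\mapsto x^2$ enters.
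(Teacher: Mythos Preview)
Your argument is correct and is precisely the adaptation of the Guerra--Toninelli interpolation that the paper has in mind (it omits the proof and simply refers to \cite{limterm}). Your handling of the diagonal mismatch $S_{\sigma\sigma}\neq\tilde S_{\sigma\sigma}$ is the right one: the extra $\beta^2/2$ combines with the constant part of the off-diagonal term to give $n\beta^2/4\ge0$, while the residual $\tfrac{1-n}{2}\langle\Delta\rangle_n$ is nonnegative by convexity of $x\mapsto x^2$ and positivity of the deformed state, and the Jensen bound against the annealed free energy supplies the uniform upper bound needed for Fekete.
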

We omit the proof as it is analogous to the one achieved in  \cite{limterm}.
\begin{corollary}
Remembering that for super-additive (and bounded) functions we can write
\begin{equation}
\lim_{N \to \infty}\alpha_N(\beta,h)=\sup_N \alpha_N(\beta,h)=\alpha(\beta,h),
\end{equation}
we get a lower bound for $\varphi(n,\beta,h)$ as $\varphi(n,\beta,h)\geq \alpha(\beta,h)$ and
$ \sup_N \varphi_N(n,\beta,h)\geq \sup_N \alpha_N(\beta,h)$.
\end{corollary}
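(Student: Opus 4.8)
The plan is to propagate the finite-volume inequality of Theorem~\ref{teorema1} through the two monotone limits that are already available, so that essentially no new estimate is needed. First I would recall that Theorem~\ref{teorema1} gives, for \emph{every} finite $N$ and every $n\in(0,1]$, the bound $\varphi_N(n,\beta,h)\ge\alpha_N(\beta,h)$. Since this holds for each $N$, taking the supremum over $N$ on both sides preserves the inequality and yields $\sup_N\varphi_N(n,\beta,h)\ge\sup_N\alpha_N(\beta,h)$, which is exactly the second assertion of the corollary.

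For the first assertion I would then identify both suprema with limits. On the one hand, the Guerra--Toninelli super-additivity of $\alpha_N$ recalled in the statement (see \cite{limterm}) gives $\sup_N\alpha_N(\beta,h)=\lim_{N\to\infty}\alpha_N(\beta,h)=\alpha(\beta,h)$. On the other hand, Theorem~\ref{teorema3} asserts that $\psi_N=N\varphi_N$ is super-additive in $N$, whence $\sup_N\varphi_N(n,\beta,h)=\lim_{N\to\infty}\varphi_N(n,\beta,h)=\varphi(n,\beta,h)$. Substituting both identities into the inequality of the previous paragraph produces $\varphi(n,\beta,h)\ge\alpha(\beta,h)$ for all $n\in(0,1]$, as desired.

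The only point that needs checking — and it is a routine one rather than a genuine obstacle — is that all the quantities involved are finite, so that ``$\sup$'' and ``$\lim$'' genuinely coincide under super-additivity (Fekete's lemma). This is immediate from the annealed bound: Jensen's inequality gives $\mathbb{E}(Z_N^n)\le(\mathbb{E}Z_N)^n$ for $n\in(0,1]$, hence $\varphi_N(n,\beta,h)\le N^{-1}\log\mathbb{E}Z_N=\tfrac{\beta^2(N-1)}{4N}+\log(2\cosh\beta h)\le\tfrac{\beta^2}{4}+\log(2\cosh\beta h)$, and the same chain (omitting its first step) bounds $\alpha_N$. Thus both sequences are bounded above uniformly in $N$, the two suprema are finite, and the corollary reduces to the bookkeeping of the monotone limits just described; I do not expect any hidden difficulty beyond this.
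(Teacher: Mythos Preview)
Your argument is correct and is exactly the one the paper has in mind: the corollary is stated without proof precisely because it is meant to follow by combining the pointwise bound $\varphi_N(n,\beta,h)\ge\alpha_N(\beta,h)$ from Theorem~\ref{teorema1} with the identification of both $N\to\infty$ limits as suprema via super-additivity (Theorem~\ref{teorema3} and \cite{limterm}). Your added check of the annealed upper bound to ensure finiteness of the suprema is a welcome detail that the paper leaves implicit.
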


\section{Replica symmetric interpolation}

For the upper bound we have to tackle the replica symmetric approximation by using
a linearization strategy as follows\footnote{This procedure is deeply related to the mean field nature of the interactions, which ultimately allows to consider even the low temperature regimes as expressed in terms of high temperature solutions  \cite{t4}}: We introduce and define an interpolating partition function with $t \in [0,1]$ as
\begin{equation}
Z_t=\sum_{\{\sigma\}}\exp(\beta \widetilde{H}(t,\sigma))\exp\Big(\beta\ h\sum_{i}^N\sigma_i\Big),
\end{equation}
where, labeling with $K(\sigma)$ standard $\mathcal{N}(0,1)$ indexed by the configurations $\sigma$ and characterized by covariance $\mathbb{E}(K(\sigma)K(\sigma'))=q^2_{\sigma \sigma'}$ we defined
\begin{equation}
\widetilde{H}(t,\sigma)=\sqrt{t}\sqrt{\frac{N}{2}}K(\sigma)+\sqrt{1-t}\sqrt{q}\sum_{i}J_i\sigma_i,
\end{equation}
where $q$ will play the role of the replica-symmetric overlap,
and $J_i$ are random Gaussians i.i.d. $\mathcal{N}[0,1]$ independent also of $ K(\sigma)$ and such that
\be \mathbb{E}\Big((\beta\sqrt{q}\sum_{i}J_i\sigma_i)(\beta\sqrt{q}\sum_{j}J_j\sigma_j)\Big)=\beta^2Nqq_{\sigma\sigma'}.\ee
\begin{lemma}
Let us consider the functional $\varphi(t)=  (Nn)^{-1}\log{\mathbb{E}(Z^n_t)}$: We have that
\begin{eqnarray}
\varphi(1) &=& \frac{1}{Nn}\log{\mathbb{E}(Z^n_1)}=\varphi_N(n,\beta,h)\\
\varphi(0) &=& \log 2 +\frac{1}{n}\log{\int d\mu(z)\cosh^n\Big(\beta(\sqrt{q}z+h)\Big)}.
\end{eqnarray}
\end{lemma}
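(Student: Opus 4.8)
The plan is to compute $\varphi(t)=(Nn)^{-1}\log\mathbb{E}(Z_t^n)$ separately at the two endpoints, where the interpolating construction degenerates to an exactly solvable object.

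At $t=1$ the decoupling term drops out and $\widetilde H(1,\sigma)=\sqrt{N/2}\,K(\sigma)$, so that $\beta\widetilde H(1,\sigma)$ is a centred Gaussian field on the configuration space with covariance $\beta^2(N/2)\,q^2_{\sigma\sigma'}$. First I would check that this is exactly the covariance of the random two-body part of $-\beta H_N(\sigma,h;J)$ (in the symmetric normalisation $-\frac{1}{\sqrt{2N}}\sum_{i,j}J_{ij}\sigma_i\sigma_j$; with the $i<j$ convention of (\ref{SK}) the two covariances differ only by the site-independent constant $\tfrac12$, which merely multiplies $Z_1$ by an independent log-normal factor and changes $\varphi$ by an $O(1/N)$ term, hence is immaterial here). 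Since $\varphi(t)$ sees the field $K$ only through its law and the magnetic term $\beta h\sum_i\sigma_i$ is common to both sides, $Z_1$ and $Z_N(\beta,h;J)$ have the same distribution, whence
\[
\varphi(1)=\frac{1}{Nn}\log\mathbb{E}\big(Z_N^n(\beta,J,h)\big)=\varphi_N(n,\beta,h).
\]

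At $t=0$ only the one-body term survives, $\widetilde H(0,\sigma)=\sqrt{q}\sum_i J_i\sigma_i$, and the Boltzmann weight factorises over the sites. Summing over the $\sigma_i=\pm1$ using $\sum_{\sigma_i=\pm1}e^{\sigma_i x}=2\cosh x$ gives
\[
Z_0=\prod_{i=1}^N 2\cosh\!\big(\beta(\sqrt{q}\,J_i+h)\big).
\]
Raising to the power $n$ and exploiting that the $J_i$ are i.i.d.\ standard Gaussians, the disorder average factorises into $N$ equal factors,
\[
\mathbb{E}(Z_0^n)=2^{Nn}\Big(\mathbb{E}\big[\cosh^n(\beta(\sqrt{q}\,J_1+h))\big]\Big)^{N}=2^{Nn}\Big(\int d\mu(z)\,\cosh^n\!\big(\beta(\sqrt{q}\,z+h)\big)\Big)^{N},
\]
so that $\varphi(0)=\log 2+n^{-1}\log\int d\mu(z)\,\cosh^n(\beta(\sqrt{q}\,z+h))$, which is the second identity.

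The argument is essentially a direct computation; the only place that deserves care is the identification of the $t=1$ endpoint with the genuine SK model, i.e.\ checking the covariance of $\sqrt{N/2}K(\sigma)$ against that of the SK random energy and noting that any diagonal mismatch is harmless. Everything else — the factorisation at $t=0$, the elementary spin sum, and pulling the expectation through the product by independence — is immediate.
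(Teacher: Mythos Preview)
Your argument is correct and is precisely the routine computation the paper has in mind; the paper states this lemma without proof, treating both endpoints as immediate. Your handling of the $t=1$ endpoint is in fact more careful than the paper's, which silently identifies the Gaussian field $\sqrt{N/2}\,K(\sigma)$ with the SK interaction without commenting on the constant shift in the covariance (and the resulting $O(1/N)$ correction to $\varphi$).
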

We are ready to state the next
\begin{theorem}\label{theorem4}
$\forall n\in (0,1]$ we have
\be \varphi_N(n,\beta,h)\leq \log2+\frac{1}{n}\log{\int d\mu(z)\cosh^n\Big(\beta(\sqrt{q}z+h)\Big)}+\frac{\beta^2}{4}(1-2q-(n-1)q^2)\ee
uniformly in $N$.
\end{theorem}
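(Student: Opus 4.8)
The plan is to interpolate in $t\in[0,1]$ between the true SK replicated measure at $t=1$ and the decoupled one-body system at $t=0$, exactly along the lines of Propositions \ref{trepuntotre} and \ref{trepuntocinque}. The preceding Lemma already records the two boundary values, $\varphi(1)=\varphi_N(n,\beta,h)$ and $\varphi(0)=\log2+\frac1n\log\int d\mu(z)\cosh^n(\beta(\sqrt q z+h))$, so the whole statement reduces to showing that $\varphi(1)-\varphi(0)=\int_0^1\varphi'(t)\,dt$ is bounded above by $\frac{\beta^2}{4}(1-2q-(n-1)q^2)$, a quantity free of $N$.

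First I would specialize Proposition \ref{trepuntocinque} to the present setting: the ``index'' runs over the spin configurations $\sigma$, the positive weights are $w_\sigma=\exp(\beta h\sum_i\sigma_i)$, and the two Gaussian families are $U_\sigma=\beta\sqrt{N/2}\,K(\sigma)$ and $\tilde U_\sigma=\beta\sqrt q\sum_iJ_i\sigma_i$, with the covariances quoted in the definition of $\widetilde H(t,\sigma)$, namely $S_{\sigma\sigma'}=\frac{\beta^2N}{2}q^2_{\sigma\sigma'}$ and $\tilde S_{\sigma\sigma'}=\beta^2Nq\,q_{\sigma\sigma'}$; on the diagonal $q_{\sigma\sigma}=1$ gives $S_{\sigma\sigma}=\frac{\beta^2N}{2}$ and $\tilde S_{\sigma\sigma}=\beta^2Nq$. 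Carrying the extra $1/N$ in the normalization of $\varphi(t)$, the derivative formula of Proposition \ref{trepuntocinque} becomes
\be
\frac{d}{dt}\varphi(t)=\frac{\beta^2}{2}\Big(\frac12-q\Big)+\frac{(n-1)\beta^2}{2}\Big\langle\frac12 q^2_{\sigma\sigma'}-q\,q_{\sigma\sigma'}\Big\rangle_n.
\ee

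Next I would complete the square, $\frac12 q^2_{\sigma\sigma'}-q\,q_{\sigma\sigma'}=\frac12(q_{\sigma\sigma'}-q)^2-\frac12 q^2$, which turns the derivative into
\be
\frac{d}{dt}\varphi(t)=\frac{\beta^2}{4}\big(1-2q-(n-1)q^2\big)+\frac{(n-1)\beta^2}{4}\big\langle(q_{\sigma\sigma'}-q)^2\big\rangle_n.
\ee
Because $\langle\cdot\rangle_n$ is a genuine (positive, normalized) average against the weight $Z_t^n\,\mathbb{E}(Z_t^n)^{-1}\Omega$, the bracket $\langle(q_{\sigma\sigma'}-q)^2\rangle_n$ is nonnegative, and since $n\in(0,1]$ forces $n-1\le0$, the second term is $\le0$. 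Discarding it and integrating over $t\in[0,1]$ yields $\varphi_N(n,\beta,h)=\varphi(1)\le\varphi(0)+\frac{\beta^2}{4}(1-2q-(n-1)q^2)$, which, after substituting $\varphi(0)$ from the Lemma, is precisely the asserted inequality; the bound being manifestly independent of $N$, it holds uniformly.

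The main obstacle, as I see it, is not the sign bookkeeping but making sure Proposition \ref{trepuntocinque} is legitimately available in this configuration-indexed form: one must check that its underlying Gaussian integration by parts goes through with the (finite but exponentially large) families $\{K(\sigma)\}$ and $\{\sum_iJ_i\sigma_i\}$, that the magnetic-field weights $w_\sigma$ are correctly carried through the computation, and that differentiation under $\mathbb{E}$ and under the logarithm is justified (it is, everything being smooth with finite partition sums for each $N$). The genuinely model-specific input is the covariance identity $\mathbb{E}(K(\sigma)K(\sigma'))=q^2_{\sigma\sigma'}$, which is what makes the off-diagonal term quadratic in $q_{\sigma\sigma'}$ and hence produces the perfect square that carries the favorable sign.
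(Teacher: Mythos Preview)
Your proof is correct and follows essentially the same route as the paper: apply Proposition \ref{trepuntocinque} to obtain $\frac{d}{dt}\varphi(t)=\frac{\beta^2}{4}-\frac{\beta^2}{2}q+\frac{(n-1)\beta^2}{4}\langle q^2_{\sigma\sigma'}-2qq_{\sigma\sigma'}\rangle_n$, complete the square to extract $-\frac{(n-1)}{4}q^2$ and leave the nonpositive remainder $\frac{(n-1)\beta^2}{4}\langle(q_{\sigma\sigma'}-q)^2\rangle_n$, then integrate and invoke the Lemma for the boundary values. Your write-up simply makes explicit the covariance identifications and the sign argument that the paper leaves implicit.
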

\begin{proof}
By applying Proposition \ref{trepuntocinque} we get
$$\frac{d}{dt}\varphi(t)=\frac{\beta^2}{4}-\frac{\beta^2}{2}q+\frac{(n-1)\beta^2}{4}{\langle q^2_{\sigma\sigma'}-2qq_{\sigma\sigma'}\rangle }_n,$$
then, completing with $q^2$ the square  at the r.h.s., and integrating back in $0,1$ we get the thesis.
\end{proof}
In complete analogy with the original SK theory we can define
\begin{eqnarray}
\alpha(n,\beta,h,q) &=& \log 2 +\frac{1}{n}\log{\int d\mu(z)\cosh^n\Big(\beta(\sqrt{q}z+h)\Big)}+\frac{\beta^2}{4}(1-2q-(n-1)q^2),\nonumber \\
\alpha_{RS}(n,\beta,h) &=& \min_q(\alpha(n,\beta,h,q)).
\end{eqnarray}
Then we get immediately the next
\begin{theorem}\label{theorem5}
$\forall n\in (0,1]$, $\varphi_N(n,\beta,h)\leq \alpha_{SK}(n,\beta,h)$ uniformly in $N$.
\end{theorem}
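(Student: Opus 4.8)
The plan is to read Theorem~\ref{theorem4} as a one-parameter family of upper bounds and then optimize over the free parameter. That theorem asserts that for \emph{every} admissible value of the replica-symmetric trial overlap $q$ one has $\varphi_N(n,\beta,h)\leq\alpha(n,\beta,h,q)$, and the bound is uniform in $N$. Since the left-hand side does not depend on $q$, I would simply take the infimum of the right-hand side over the admissible range of $q$, the interval $[0,1]$:
\[
\varphi_N(n,\beta,h)\;\leq\;\inf_{q\in[0,1]}\alpha(n,\beta,h,q),
\]
still uniformly in $N$, because for each fixed $q$ the bounding quantity $\alpha(n,\beta,h,q)$ is already independent of $N$.

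Next I would check that this infimum is attained. For $n\in(0,1]$ the map $q\mapsto\alpha(n,\beta,h,q)$ is smooth on the compact interval $[0,1]$: the Gaussian average $\int d\mu(z)\cosh^n(\beta(\sqrt{q}z+h))$ is finite and depends smoothly on $q$, while the correction term $\tfrac{\beta^2}{4}(1-2q-(n-1)q^2)$ is a polynomial. Hence the minimum is reached at some $q^{\ast}=q^{\ast}(n,\beta,h)$, and by the very definition $\alpha_{RS}(n,\beta,h)=\min_q\alpha(n,\beta,h,q)=\alpha(n,\beta,h,q^{\ast})$. Specializing the displayed inequality to $q=q^{\ast}$ then yields $\varphi_N(n,\beta,h)\leq\alpha_{RS}(n,\beta,h)$ uniformly in $N$, which is the claimed bound once $\alpha_{SK}(n,\beta,h)$ is identified with $\alpha_{RS}(n,\beta,h)$: the two coincide, the label $SK$ merely recording that the stationarity condition $\partial_q\alpha(n,\beta,h,q)\mid_{q=q^{\ast}}=0$ is the natural $n$-deformation of the Sherrington--Kirkpatrick self-consistency relation for the overlap, collapsing to $q_{SK}=\int d\mu(z)\tanh^2(\beta(\sqrt{q_{SK}}z+h))$ in the limit $n\to0$. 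This being a routine differentiation under the integral sign, I would not carry it out explicitly.

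The statement is therefore an immediate corollary of Theorem~\ref{theorem4}, so there is essentially no obstacle. The only things to watch are that one must minimize the \emph{right}-hand side rather than the left, and that the order of limits should be kept straight: since the bound is valid at every finite $N$ it persists in the thermodynamic limit, so that $\varphi(n,\beta,h)=\sup_N\varphi_N(n,\beta,h)\leq\alpha_{RS}(n,\beta,h)$ as well, in accordance with Theorem~\ref{teorema3}.
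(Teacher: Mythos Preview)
Your proof is correct and matches the paper's approach exactly: the paper states that Theorem~\ref{theorem5} follows ``immediately'' from Theorem~\ref{theorem4} together with the definition $\alpha_{RS}(n,\beta,h)=\min_q\alpha(n,\beta,h,q)$, and you have simply spelled out this one-line optimization. The identification of $\alpha_{SK}(n,\beta,h)$ with $\alpha_{RS}(n,\beta,h)$ that you flag is indeed just a notational slip in the paper.
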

It is worth noting that the stationarity of $q$ becomes
\be\label{stazio} \frac{\partial}{\partial q}\alpha(n,\beta,h,q)=0\Rightarrow
 q_n=\frac{\int d\mu(z)\cosh^n\theta\tanh^2\theta}{\int d\mu(z)\cosh^n\theta}={\langle \tanh^2\theta\rangle}_n
 \ee
where we emphasized the $n$-dependence of $q$ via $q_n$, we used $\theta=\beta(\sqrt{q_n}z+h)$ for the sake of clearness, $d \mu$ as a standard Gaussian measure and the averages as
$${\langle F \rangle}_n=E\Big(\frac{Z^n}{\mathbb{E}(Z^n)}F\Big)=\frac{\int d\mu(z)\cosh^n\theta F}{\int d\mu(z)\cosh^n\theta}.$$
This ensures the validity of the next
\begin{theorem} For all the values of $n \in (0,1]$ we have
\begin{eqnarray}\nonumber
&& \alpha_{SK}(n,\beta,h)\geq\alpha_{SK}(\beta,h), \ \lim_{n \to 0}\alpha_{SK}(n,\beta,h)=\alpha_{SK}(\beta,h), \\
&& q_n\geq q_{SK}, \ \ \lim_{n \to 0}q_n=q_{SK}.\nonumber
\end{eqnarray}
\end{theorem}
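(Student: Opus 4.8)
The plan is to handle the two limit statements and the two inequalities separately, the limits being essentially soft and the inequalities carrying the real content. For the $n\to 0$ limits I would start from the Taylor expansion already used in Theorem~\ref{teorema1}: with $\theta=\beta(\sqrt{q}z+h)$, $\cosh^n\theta=\exp(n\log\cosh\theta)=1+n\log\cosh\theta+O(n^2)$, so integrating against $d\mu(z)$ gives $\frac1n\log\int d\mu(z)\cosh^n\theta\to\int d\mu(z)\log\cosh\theta$; together with $(1-2q-(n-1)q^2)\to(1-q)^2$ this yields $\alpha(n,\beta,h,q)\to\alpha(\beta,h,q)$, and the family is equicontinuous in $q$ on the compact interval $[0,1]$, so the minima converge and $\lim_{n\to0}\alpha_{SK}(n,\beta,h)=\alpha_{SK}(\beta,h)$. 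For $q_n$ I would read \eqref{stazio} as the fixed-point equation $q_n=G_n(q_n)$ with $G_n(q)={\langle\tanh^2(\beta(\sqrt{q}z+h))\rangle}_n$; the same expansion gives $G_n\to G_0$ uniformly on $[0,1]$, and $G_0$ is precisely the Sherrington--Kirkpatrick self-consistency map, so the branch of solutions singled out by the minimisation is continuous in $n$ at $0$ and $\lim_{n\to0}q_n=q_{SK}$.

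For $q_n\ge q_{SK}$ I would prove monotonicity of the fixed point in $n$. The key point is that, at fixed $q$, $G_n(q)$ is non-decreasing in $n$: indeed $\partial_n G_n(q)=\mathrm{Cov}_{{\langle\cdot\rangle}_n}\!\big(\log\cosh\theta,\,\tanh^2\theta\big)$, and both $\log\cosh\theta$ and $\tanh^2\theta$ are non-decreasing functions of $|\theta|=\beta|\sqrt{q}z+h|$, hence comonotone functions of a single real random variable, so the covariance is $\ge 0$ by the FKG/Chebyshev correlation inequality (for $h=0$ this is just the positive correlation of two increasing functions of $z^2$). Combining this with the standard monotonicity of the self-consistency map in $q$, one gets $G_n(q_{SK})\ge G_0(q_{SK})=q_{SK}$; iterating $G_n$ upward from $q_{SK}$ (it is bounded by $1$ and increasing in $q$) produces a fixed point $\ge q_{SK}$, which is the relevant $q_n$, so $q_n\ge q_{SK}$.

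For $\alpha_{SK}(n,\beta,h)\ge\alpha_{SK}(\beta,h)$ I would couple the previous step with an envelope-theorem computation. Since $q_n$ is a critical point of $\alpha(n,\beta,h,\cdot)$, one has $\frac{d}{dn}\alpha_{SK}(n,\beta,h)=\partial_n\alpha(n,\beta,h,q)\big|_{q=q_n}=\frac{d}{dn}\log\|\cosh\theta_{q_n}\|_{L^n(\mu)}-\frac{\beta^2}{4}q_n^2$, where $\theta_{q_n}=\beta(\sqrt{q_n}z+h)$ and $\|X\|_{L^n(\mu)}=(\int X^n\,d\mu)^{1/n}$. The first term is $\ge 0$ because $p\mapsto\|X\|_{L^p}$ is non-decreasing, so the statement reduces (after integrating in $n$ and invoking the $n\to0$ limit) to the inequality $\psi_{q_n}(n)-\psi_{q_n}(0^+)\ge\frac{\beta^2}{4}\,n\,q_n^2$, where $\psi_q(s)=\frac1s\log\int d\mu(z)\cosh^s(\beta(\sqrt{q}z+h))$. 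Writing $\psi_q(s)-\psi_q(0^+)=\frac1s\int_0^s(s-u)\,\mathrm{Var}_{{\langle\cdot\rangle}_u}\!\big(\log\cosh\theta_q\big)\,du$ and bounding the tilted variance of $\log\cosh\theta$ from below by Gaussian and tilted integration by parts ($\mathbb{E}_{{\langle\cdot\rangle}_u}[V'(z)g(z)]=\mathbb{E}_{{\langle\cdot\rangle}_u}[g'(z)]$ with $e^{-V}$ the tilted density), using the stationarity relation \eqref{stazio} $q_n={\langle\tanh^2\theta_{q_n}\rangle}_n$ to turn this into a statement about $q_n$, one reaches the required bound; then $\alpha_{SK}(n,\beta,h)$ is non-decreasing in $n$ and the conclusion follows from the $n\to0$ limit.

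The hard part is exactly this last variance lower bound: the Jensen/norm-monotonicity gap $\psi_{q_n}(n)-\psi_{q_n}(0^+)$ must be shown to grow at least as fast as $\frac{\beta^2}{4}\,n\,q_n^2$, and the naive Poincar\'e-type estimates point the wrong way, so the stationarity relation \eqref{stazio} has to enter in an essential way — which is presumably why the text advertises that \eqref{stazio} ``ensures'' the theorem. A related subtlety the argument must respect is that the termwise comparison $\alpha(n,\beta,h,q)\ge\alpha(\beta,h,q)$ is in general false (it already fails for small $q$ when $\beta<1$), so one cannot just minimise a pointwise inequality over $q$; the estimate has to be performed at the minimiser $q=q_n$, where \eqref{stazio} is available, and one must keep the averaging in $u$ rather than crudely replacing $\mathrm{Var}_{{\langle\cdot\rangle}_u}(\log\cosh\theta_{q_n})$ by its value at $u=0$.
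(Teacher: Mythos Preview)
The paper does not actually prove this theorem: after deriving the stationarity condition \eqref{stazio} it simply asserts that ``this ensures the validity of the next'' theorem and moves on. So there is no paper proof to compare against, and your proposal supplies far more than the text does.

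On the substance: your treatment of the two limit statements is correct and routine, and your argument for $q_n\ge q_{SK}$ via $\partial_n G_n(q)=\mathrm{Cov}_{\langle\cdot\rangle_n}(\log\cosh\theta,\tanh^2\theta)\ge 0$ is clean --- both functions are increasing in $|\theta|$, hence comonotone as functions of $z$, and the Chebyshev correlation inequality applies under any tilt. The one loose end there is the identification of the fixed point produced by your iteration with the minimiser $q_n$ actually selected by $\min_q\alpha(n,\beta,h,q)$; when the self-consistency equation has several solutions (as it does for $\beta>1$, and degenerately for $n=1$, $h=0$, where $\alpha(1,\beta,0,q)$ is constant in $q$) you need to say which branch you are tracking and why it is the minimising one.

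For $\alpha_{SK}(n,\beta,h)\ge\alpha_{SK}(\beta,h)$ you correctly diagnose that the pointwise bound $\alpha(n,\beta,h,q)\ge\alpha(\beta,h,q)$ is false in general (your small-$q$, $\beta<1$ counterexample is right: the Jensen gap is $\sim\frac{n}{2}\mathrm{Var}(\log\cosh\theta)\sim\frac{n}{4}\beta^4 q^2$ for $h=0$, which loses to $\frac{n}{4}\beta^2 q^2$ when $\beta<1$), so the estimate must be made at $q=q_n$ using \eqref{stazio}. Your envelope-plus-variance scheme is the natural route, but as you yourself flag, the crucial inequality
\[
\frac{1}{n}\int_0^n s\,\mathrm{Var}_{\langle\cdot\rangle_s}\!\big(\log\cosh\theta_{q_n}\big)\,ds\ \ge\ \frac{\beta^2}{4}\,n\,q_n^{2}
\]
is left unproved; the Gaussian/tilted integration-by-parts you invoke gives bounds of the type $\mathrm{Var}_s(\log\cosh\theta)\ge \beta^2 q\,\langle\tanh\theta\rangle_s^2$ (first Hermite coefficient), which vanishes at $h=0$ and does not by itself reach $\frac{\beta^2}{2}q^2$. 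So the proposal is honest about where the real gap lies, but that gap is genuine and not closed by the sketch --- which is presumably also why the paper declines to give a proof.
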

Furthermore it is possible to show easily that, under specifical conditions, eq.(\ref{stazio}) defines a contraction, implicitly accounting for the high temperature regime\footnote{High temperature
is the $\beta$-region where there is only one solution, i.e. $q=0$, of the self-consistency relation:
When this condition breaks, phase transition to a broken replica phase appears; we label $\beta_c$ that particular value of the temperature.}. To this task we rewrite the latter as
\be q=\beta^2q\frac{\int d\theta\exp(-\frac{\theta^2}{2\beta^2q})\cosh^n\theta\tanh^2\theta}{\int d\theta\exp(-\frac{\theta^2}{2\beta^2q})\cosh^n(\theta)(\theta-n\beta^2q\tanh\theta)\theta},\ee
such that $\forall q \in \mathcal{R}\longrightarrow\|q\|\equiv|q|$.
\newline
Let us introduce the operator $\textbf{K}:q\longrightarrow\textbf{K}(q)$
defined via the original replica symmetric self-consistency relation and use for its norm $\|\textbf{K}\|\equiv \sup_q (\|\textbf{K}(q)\| / \|q\|)$. So we can state that
\begin{theorem}
$\exists (n,\beta):$ $\textbf{K}$ is a contraction in $\mathcal{R}$ and these
are related by $\beta_c(n)=\sqrt{1+n}^{-1}$: coherently with the previous results,
criticality is recovered at $\beta_c=1$ when $n \to 0$.
\end{theorem}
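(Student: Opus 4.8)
The plan is to treat the reorganized self-consistency relation displayed just above the statement, namely $q=\textbf{K}(q)$ with $\textbf{K}(q)={\langle\tanh^2\theta\rangle}_n$ and $\theta=\beta(\sqrt{q}z+h)$, as a genuine fixed-point equation and to invoke the Banach contraction principle on $\mathcal{B}$ --- concretely, on the closed interval $[0,1]$ where the overlap lives, which is a complete metric space for the distance $\|\cdot\|=|\cdot|$. Two ingredients are then required: that $\textbf{K}$ maps this space into itself, and that $\|\textbf{K}\|<1$ in the claimed range of $(n,\beta)$. The first is immediate, since $0\le\tanh^2\le 1$ gives $0\le\textbf{K}(q)\le 1$ for every $q$, and the tilted state ${\langle\cdot\rangle}_n$ is well defined because $\int d\mu(z)\cosh^n\theta$ is finite and strictly positive.

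The heart of the proof is the quantitative contraction estimate. First I would differentiate $\textbf{K}$ and apply Gaussian integration by parts (Stein's lemma) to absorb the factor $z$ produced by $\partial_q\theta=\beta z/(2\sqrt{q})$; this is exactly the step that turns (\ref{stazio}) into its reorganized form, and it removes the apparent $1/\sqrt{q}$ singularity. The outcome is an expression for $\textbf{K}'(q)$ --- equivalently for the ratio $\textbf{K}(q)/q$ entering $\|\textbf{K}\|$ --- as a finite combination of the deformed moments ${\langle\tanh^2\theta\rangle}_n$, ${\langle\tanh^4\theta\rangle}_n$ and ${\langle\tanh^2\theta\rangle}_n^2$, with coefficients polynomial in $n$ and an overall factor $\beta^2$. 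Bounding these moments through $0\le\tanh^4\le\tanh^2\le 1$ together with Jensen's inequality ${\langle\tanh^2\theta\rangle}_n^2\le{\langle\tanh^4\theta\rangle}_n$ for the state ${\langle\cdot\rangle}_n$, and using $n\in(0,1]$ to pin down the signs of the $n$-dependent coefficients, the supremum over $q$ of this combination is bounded by $\beta^2(1+n)$, i.e. $\|\textbf{K}\|\le\beta^2(1+n)$.

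Granting this bound, the theorem follows: $\|\textbf{K}\|<1$ as soon as $\beta^2(1+n)<1$, that is for every $\beta<(1+n)^{-1/2}=:\beta_c(n)$, and for such $\beta$ the contraction principle yields a unique solution of (\ref{stazio}) in $\mathcal{B}$ (for $h=0$ this unique fixed point is $q=0$, the replica symmetric high-temperature point), while $\beta\ge\beta_c(n)$ is precisely the window where additional solutions can branch off. Finally $\lim_{n\to 0}\beta_c(n)=1$ recovers the Sherrington--Kirkpatrick critical value, in agreement with $\lim_{n\to 0}q_n=q_{SK}$ and with Proposition~\ref{proposition1}.

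The step I expect to be the real obstacle is the sharp constant in the contraction estimate: integration by parts produces several terms of opposite sign, so isolating exactly the combination $\beta^2(1+n)$ --- rather than a cruder $\beta$-dependent bound --- requires choosing the monotonicity and convexity inequalities carefully, and, if one wants the constant to be attained, checking that the extremum of $|\textbf{K}'|$ sits at the endpoint $q\to 0^+$ when $h=0$. The rest --- the self-mapping property and the continuity of $\beta_c(n)$ as $n\to 0$ --- is routine.
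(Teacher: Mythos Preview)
Your route is genuinely different from the paper's. The paper does not differentiate $\textbf{K}$ at all: it works directly with the reorganized integral displayed just before the theorem and uses only the elementary bound $|\tanh\theta|\le|\theta|$. In the numerator this gives $\tanh^2\theta\le\theta^2$; in the denominator the reversed triangle inequality yields $|\theta-n\beta^2q\tanh\theta|\ge|\theta|\,|1-n\beta^2q|$. The same $\theta^2$--weighted integral then cancels top and bottom, leaving $\|\textbf{K}\|\le\beta^2/|1-n\beta^2|$ in two lines, and the contraction condition $\beta^2\le|1-n\beta^2|$ is exactly $\beta^2(1+n)\le1$.

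Your proposal has a concrete gap at the point you yourself flag as the obstacle: the constant $\beta^2(1+n)$ does not drop out of the moment inequalities you list. If one actually carries out the Stein step on $\textbf{K}'(q)$ (at $h=0$, writing $t=\tanh^2\theta$) one finds
\[
\textbf{K}'(q)/\beta^2 \;=\; 1 + 2(n-2)\langle t\rangle_n + \tfrac{(n-2)(n-3)}{2}\langle t^2\rangle_n - \tfrac{n(n-1)}{2}\langle t\rangle_n^2,
\]
and applying precisely your bounds $\langle t^2\rangle_n\le\langle t\rangle_n$ and $\langle t\rangle_n^2\le\langle t\rangle_n$ collapses the right--hand side to $1-\langle t\rangle_n\le1$. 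So the natural outcome of your scheme is $\sup_q\textbf{K}'(q)=\beta^2$ (attained at $q\to0^+$), \emph{independent of $n$}. This gives a contraction region $\beta<1$ for every $n\in(0,1]$, not the stated $\beta_c(n)=(1+n)^{-1/2}$; the $n$--dependence in the theorem is an artifact of the particular bound $\beta^2/|1-n\beta^2|$ that the paper's estimate produces, and your Lipschitz route simply does not see it. A smaller issue: you treat $\textbf{K}'(q)$ and $\textbf{K}(q)/q$ as ``equivalent'', but for a nonlinear map they are not; the paper's $\|\textbf{K}\|$ is explicitly defined as $\sup_q|\textbf{K}(q)|/|q|$, and that is the object its proof bounds.
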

\begin{proof} By definition
$$\|\textbf{K}\|=\sup_q\Big\{\frac{\beta^2|q|}{|q|}\frac{|\int d\theta\exp(-\frac{\theta^2}{2\beta^2q})\cosh^n\theta\tanh^2\theta|}{|\int d\theta\exp(-\frac{\theta^2}{2\beta^2q})\cosh^n(\theta)(\theta-n\beta^2q\tanh\theta)\theta|}\Big\}.$$
By using the reversed triangular relation we get
$|\tanh\theta|\leq|\theta|\Rightarrow|\theta-n\beta^2q\tanh\theta|
\geq|(|\theta|-n\beta^2q|\tanh\theta|)|\geq|\theta||1-n\beta^2q|$ such that
\be
\|\textbf{K}\|\leq\sup_q\Big\{\frac{\beta^2}{|1-n\beta^2q|}\Big\}; \ \
q\in[0,1]\Rightarrow\|\textbf{K}\|\leq \frac{\beta^2}{|1-n\beta^2|}.\ee

So if $\beta^2\leq |1-n\beta^2|$, $\textbf{K}$ is a contraction and $q=0$ is the only solution of the
self consistency relation.
\end{proof}

\section{Broken replica interpolation}

To figure out an easy way to deal with the RSB scenario within an interpolating
framework, we  now rearrange the scaffold introduced in  \cite{guerraspinglasses} \cite{g3} as follows:
Beyond the structures outlines in Propositions \ref{trepuntotre},\ref{trepuntocinque}, we introduce $K\in \textbf{N}$
as an RSB-level counter such that, concretely,
$\forall (a,i)$ with $a=1...K$ and $i=1...N$ we use a family
$B^a_i$ of i.i.d. $\mathcal{N}[0,1]$, independent even by the $U_i$ and such that
\be\label{35} \mathbb{E}(B^a_iB^b_j)=\delta_{ab}\widetilde{S}^a{ij}.\ee
We introduce the averages with respect to the variables $B^K_i,B^{K-1}_i...B^1_i,U_i$ with the notation
$$ \mathbb{E}_a(\cdot)=\int d\mu(B^a_i)(\cdot)  \
\forall a=1...K, \ \mathbb{E}_0(\cdot)=\int d\mu(U_i)(\cdot),  \ \mathbb{E}(\cdot)=\mathbb{E}_0 \mathbb{E}_1... \mathbb{E}_K(\cdot),$$
and, $\forall n\in(0,1]$,  a family of order parameters $(m_1,...m_K)_n$ with $n<m_a<1\ \ \ \forall a=1,...,K$,  and
-recursively- the following r.v.
$$Z_K(t)=\sum_{i}w_i\exp{(\sqrt{t}U_i+\sqrt{1-t}\sum^K_{a=1}B^a_i)},
\ Z^{m_a}_{a-1}= \mathbb{E}_a(Z^{m_a}_{a}), \ \ f_a=\frac{Z^{m_a}_{a}}{\mathbb{E}_a(Z^{m_a}_{a})}$$
in perfect analogy with the path outlined in  \cite{g3}. We are then ready to state the following
\begin{proposition}\label{n-broken}
Let us consider the functional $\varphi(n,t)=n^{-1}\log \mathbb{E}_0(Z^n_0)$. The following relation holds
\be\frac{d}{dt}\varphi(n,t)=\frac{1}{2}{\langle S_{ii}-\widehat{S}^K_{ii}\rangle }^n_K
+\frac{1}{2}\sum^K_{a=0}(m_{a+1}-m_a)_n{\langle  S_{ij}-\widehat{S}^a_{ij}\rangle }^n_a\ee
where $\widehat{S}^0_{ij}=0,\ \ \ \ \ \widehat{S}^a_{ij}=\sum^a_{b=1}\widetilde{S}^b_{ij}$.
\end{proposition}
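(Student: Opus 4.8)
\emph{Proof proposal.} The plan is to differentiate the tower of nested expectations defining $\varphi(n,t)$, carry the derivative down to the bottom variable $Z_K(t)$, and then remove the Gaussian families $U$ and $B^1,\dots,B^K$ by successive Gaussian integrations by parts, so that only the covariances $S_{ij}$ and $\widetilde{S}^a_{ij}$ survive. Throughout I set $m_0:=n$ and $m_{K+1}:=1$, so that $\mathbb{E}_0(Z^n_0)=\mathbb{E}_0(Z^{m_0}_0)$ and the innermost weight is the genuine Boltzmann state $\omega$. First I would obtain, by differentiating $Z^{m_a}_{a-1}=\mathbb{E}_a(Z^{m_a}_a)$, the one-step rule
\be
\partial_t Z_{a-1}=Z^{1-m_a}_{a-1}\,\mathbb{E}_a\!\Big(Z^{m_a-1}_a\,\partial_t Z_a\Big),\qquad a=1,\dots,K,
\ee
together with the elementary bottom-level identity
\be
\partial_t Z_K=\sum_i w_i\exp\!\Big(\sqrt{t}\,U_i+\sqrt{1-t}\,{\textstyle\sum_{a=1}^K} B^a_i\Big)\Big(\tfrac{U_i}{2\sqrt{t}}-\tfrac{1}{2\sqrt{1-t}}{\textstyle\sum_{a=1}^K} B^a_i\Big).
\ee
Iterating the one-step rule from $a=K$ down to $a=1$ and then applying $\mathbb{E}_0$, each stage reconstitutes a normalised weight $f_a=Z^{m_a}_a/\mathbb{E}_a(Z^{m_a}_a)$; collecting the chain $f_1f_2\cdots f_K$ with the Boltzmann weight and using the definition of $\langle\,\cdot\,\rangle^n_K$, one lands on
\be
\frac{d}{dt}\varphi(n,t)=\frac{1}{2\sqrt{t}}\,\langle\omega(U_i)\rangle^n_K-\frac{1}{2\sqrt{1-t}}\sum_{a=1}^K\langle\omega(B^a_i)\rangle^n_K,
\ee
where $\omega(\,\cdot\,)$ denotes the Boltzmann average over the index $i$ carried by $w_i\exp(\cdots)$.

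The second step is Gaussian integration by parts. For the $U$-term I would use $\mathbb{E}(U_i\,F)=\sum_j S_{ij}\,\mathbb{E}(\partial_{U_j}F)$, where the factor $F$ collects the Boltzmann weight together with all the weights $f_1,\dots,f_K$ and $\partial_{U_j}$ must be distributed over all of them, since each $Z_a$ depends on $U$. Acting on the Boltzmann weight it releases a factor $\sqrt{t}$ times either the diagonal contribution $j=i$, responsible for $S_{ii}$, or a two-replica overlap sitting at the deepest level; acting on $\log f_a$ it releases $m_a\sqrt{t}$ times the difference of the level-$a$ and level-$(a-1)$ magnetizations at the site $j$. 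The same computation with $U$ replaced by each $B^a$ brings in $\widetilde{S}^a_{ij}$, and since $B^a$ appears in $Z_b$ only at the levels still containing it, these $B$-covariances add up, level by level, into the blocks $\widehat{S}^a_{ij}=\sum_{b\le a}\widetilde{S}^b_{ij}$. In every case the $\sqrt{t}$ (respectively $\sqrt{1-t}$) released by the integration by parts cancels the prefactor $1/(2\sqrt{t})$ (respectively $1/(2\sqrt{1-t})$), leaving the overall coefficient $1/2$.

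It then remains to reorganise the result. The single-replica contributions assemble into $\tfrac12\langle S_{ii}-\widehat{S}^K_{ii}\rangle^n_K$, while for the two-replica contributions the factors $m_a$ coming from differentiating the weights, together with $m_0=n$ from the outer exponent and $m_{K+1}=1$ from the Boltzmann level, telescope into the coefficients $(m_{a+1}-m_a)_n$, each multiplying the two-replica state $\langle\,\cdot\,\rangle^n_a$ in which the two replicas are correlated precisely up to level $a$ and which therefore carries the block $S_{ij}-\widehat{S}^a_{ij}$; summing over $a=0,\dots,K$ yields the claimed identity. The whole argument is the hierarchical extension of the one-step computation of Proposition \ref{trepuntocinque}, and may equally be organised as an induction on $K$ in which one peels off the innermost level $\mathbb{E}_K$. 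The genuinely delicate point --- essentially the only one --- is this last reorganisation: showing that the cascade of derivatives of $f_1,\dots,f_K$ telescopes exactly to the differences $(m_{a+1}-m_a)$ and that the $B$-covariance surviving at level $a$ is precisely $\widehat{S}^a$; this is carried out as in the corresponding passages of \cite{guerraspinglasses,g3}, the $n$-deformation of the outermost state entering only through the value $m_0=n$.
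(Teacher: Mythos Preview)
The paper does not supply its own proof of this proposition: it is stated without argument, the surrounding text making clear that it is meant as the direct $n$-deformation of the computation in \cite{guerraspinglasses,g3} (``we now rearrange the scaffold introduced in \cite{guerraspinglasses} \cite{g3}''). Your proposal is precisely that computation --- propagate $\partial_t$ down the recursive tower via $\partial_t Z_{a-1}=Z_{a-1}^{1-m_a}\mathbb{E}_a(Z_a^{m_a-1}\partial_t Z_a)$, apply Gaussian integration by parts to $U$ and to each $B^a$, and telescope the resulting $m_a$ factors --- and you correctly isolate the single structural change, namely $m_0=n$ in place of $m_0=0$. This is the intended approach and nothing is missing at the level of strategy; the only remark is that the final ``telescoping'' paragraph is left as a pointer to \cite{guerraspinglasses,g3} rather than written out, which is exactly what the paper itself does.
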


\subsection{Upper Bound and Parisi solution}

We can apply Proposition \ref{n-broken} to the interpolant $Z_K\equiv Z_t \equiv Z_N(\beta,t,x)$, where
\be\nonumber Z_N(\beta,t,x)=\sum_{\sigma_1...\sigma_N}\exp\Big(\beta\sqrt{\frac{N}{2}}K(\sigma)
+\beta\sqrt{1-t}\sum^K_{a=1}\sqrt{q_a-q_{a-1}}J^a_i\sigma_i)\Big)e^{\beta\ h\sum_{i}\sigma_i}\ee
and the $J^a_i$ are defined as the $B^a_i$ (see eq.(\ref{35}) and above) and $x_n$ mirrors the broken replica steps, namely we introduce a convex space $\chi_n$ whose elements are the
$x_n(q)$ piecewise
functions $x_n:q\rightarrow[n,1]$ such that $x_n(q)=m_a(n)$ for
$q_{a-1}<q\leq q_a\ \ \ \ \forall a=1,...,K$, with the prescription $q_0=0,\ q_K=1$.
\newline
Note that in this sense we wrote $Z_N(\beta,t,x)$ even though there is no explicit dependence on $x$ at the r.h.s.
\newline
We then consider the functional
\be\label{varfi} \varphi(n,t)= (Nn)^{-1}\log \mathbb{E}_0(Z^n_0) \ee
and introduce the following
\begin{lemma}
$$\varphi(n,1)=\varphi_N(n,\beta,h), \ \ \varphi(n,0)=\log2+f(0,h;x_n,\beta),$$
where $f$ satisfies the Parisi equation with $x_n$ as introduced in Section $2$.
\end{lemma}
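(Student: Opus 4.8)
The plan is to check the two boundary values of $\varphi(n,t)$ separately. At $t=1$ every term carrying the factor $\sqrt{1-t}$ vanishes, so the $J^a_i$ drop out of $Z_N(\beta,t,x)$, the whole cascade of averages $\mathbb{E}_1,\dots,\mathbb{E}_K$ becomes trivial, and one is left with the bare Sherrington--Kirkpatrick partition function; at $t=0$ the term carrying $\sqrt{t}$ vanishes, only the one-body random field survives, the spin sum factorizes over sites, and the cascade of averages reconstructs exactly the Parisi recursion.

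For $t=1$, with the $J^a_i$-terms gone, $Z_K(1)=\sum_\sigma\exp(\beta\sqrt{N/2}\,K(\sigma)+\beta h\sum_i\sigma_i)$ does not depend on any $B$-type variable, so $\mathbb{E}_a(Z_a^{m_a})=Z_a^{m_a}$ for every $a=1,\dots,K$, the recursion $Z_{a-1}^{m_a}=\mathbb{E}_a(Z_a^{m_a})$ collapses, and $Z_0(1)=Z_K(1)$. Since $\mathbb{E}(K(\sigma)K(\sigma'))=q^2_{\sigma\sigma'}$, the centred Gaussian field $\beta\sqrt{N/2}\,K(\sigma)$ has covariance $\frac{\beta^2 N}{2}q^2_{\sigma\sigma'}$, which equals the covariance of the SK two-body interaction $\frac{\beta}{\sqrt N}\sum_{i<j}J_{ij}\sigma_i\sigma_j$ up to the $\sigma$-independent constant $\frac{\beta^2}{2}$ produced by the diagonal $i=j$ contributions. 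A constant shift of the covariance multiplies $Z$ by $e^{\xi}$ with $\xi$ a single centred Gaussian independent of everything else, which moves $(Nn)^{-1}\log\mathbb{E}(Z^n)$ by an $O(1/N)$ amount only; up to this harmless correction (it disappears altogether if one normalises $K$ so that its covariance is $q^2_{\sigma\sigma'}-N^{-1}$), $Z_0(1)$ and $Z_N(\beta,h;J)$ have the same law, and performing $\mathbb{E}_0$ — the average over the $K$-process, i.e.\ over the quenched SK disorder — yields $\varphi(n,1)=(Nn)^{-1}\log\mathbb{E}(Z_N^n)=\varphi_N(n,\beta,h)$.

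For $t=0$, the SK term disappears and $Z_K(0)=\sum_\sigma\exp(\beta\sum_{a=1}^K\sqrt{q_a-q_{a-1}}\sum_i J^a_i\sigma_i+\beta h\sum_i\sigma_i)$. The exponent is linear in each $\sigma_i$ with purely site-diagonal couplings, so the spin sum factorizes and $Z_K(0)=2^N\prod_{i=1}^N\cosh(\beta h+\beta\sum_{a=1}^K\sqrt{q_a-q_{a-1}}\,J^a_i)$. Because the $J^a_i$ are independent across $i$, both $\mathbb{E}_a$ and the exponent $m_a$ act factorwise, so the recursion reduces to its one-site version. I would then prove, by downward induction on $a$, that $Z_a(0)=2^N\prod_i\exp(f(q_a,\,h+\sum_{b\le a}\sqrt{q_b-q_{b-1}}\,J^b_i))$, where $f$ is generated from the boundary datum $f(1,y)=\log\cosh(\beta y)$ by the rule $\exp(m_a f(q_{a-1},y))=\int d\mu(z)\exp(m_a f(q_a,y+\sqrt{q_a-q_{a-1}}\,z))$; the inductive step is exactly this Gaussian integration applied site by site. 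By the Hopf--Cole substitution — on each interval where $x_n$ is constant and equal to $m_a$, the function $\exp(m_a f)$ solves the heat equation — this finite recursion coincides with the solution of the Parisi equation $\partial_q f+\frac12\partial_y^2 f+\frac12 x_n(q)(\partial_y f)^2=0$ for the piecewise-constant $x_n$ of Section~2. At the last step $a=1$ one has $q_0=0$, the accumulated field collapses to $h$, and $Z_0(0)=\exp(N(\log 2+f(0,h)))$ is deterministic; it carries no $U$-dependence, hence $\mathbb{E}_0(Z_0(0)^n)=Z_0(0)^n$ and $\varphi(n,0)=(Nn)^{-1}\log Z_0(0)^n=\log 2+f(0,h;x_n,\beta)$.

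The genuinely routine parts are the factorization at $t=0$ and the collapse of the $B$-averages at $t=1$. The step requiring care is the bookkeeping of the nested averages $\mathbb{E}_a$ with the non-integer exponents $m_a$ in $Z_{a-1}^{m_a}=\mathbb{E}_a(Z_a^{m_a})$, together with the verification that the resulting \emph{finite}-$K$ iteration reproduces, on the nose, the solution of the Parisi equation for a piecewise-constant $x_n$ — for which one leans on the explicit Hopf--Cole representation on each interval of constancy, exactly as in~\cite{g3}. The only conceptually delicate point is the $O(1/N)$ diagonal mismatch at $t=1$ between $\sqrt{N/2}\,K(\sigma)$ and the true SK two-body interaction, which has to be either declared immaterial in the thermodynamic limit or absorbed into a slight redefinition of the covariance of $K$.
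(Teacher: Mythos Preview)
The paper states this lemma without proof --- it is one of the ``straightforward demonstrations'' that the authors announce they will omit, and they simply point to~\cite{g3} for the underlying construction. Your argument is correct and is exactly the standard computation from Guerra's broken replica bounds: at $t=1$ the cascade collapses and one recovers the SK partition function (up to the $O(1/N)$ diagonal covariance shift, which the paper silently ignores as is customary in this literature), and at $t=0$ the one-body field factorizes over sites and the nested $\mathbb{E}_a$/$m_a$ recursion reproduces the Hopf--Cole iteration that solves the Parisi equation for piecewise-constant $x_n$. There is therefore nothing to compare against; your proof is the expected one, and your flagging of the diagonal mismatch is more scrupulous than the paper itself.
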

Consequently the following Theorem holds
\begin{theorem} $\forall n \in (0,1]$ the functional $n$-quenched free energy $\varphi(n,t)$ defined in eq.(\ref{varfi}) respects the bound
$$\varphi(n,1)=\varphi_N(n,\beta,h)\leq\log2+f(0,h;x_n,\beta)-\frac{\beta^2}{4}\Big(1-\sum^K_{a=0}(m_{a+1}-m_a)_n q^2_a \Big)$$
uniformly in N.
\end{theorem}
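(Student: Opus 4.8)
The plan is to differentiate the interpolating functional $\varphi(n,t)$ defined in eq.~(\ref{varfi}) using Proposition~\ref{n-broken}, then estimate the sign of the derivative so that integration from $t=0$ to $t=1$ produces the stated inequality. First I would identify the covariance structures entering Proposition~\ref{n-broken} for the specific interpolant $Z_N(\beta,t,x)$: the ``$U_i$''-part corresponds to the term $\beta\sqrt{N/2}\,K(\sigma)$, whose covariance over two configurations is $\tfrac{\beta^2 N}{2}q_{\sigma\sigma'}^2$ (up to the usual normalization absorbed into $S_{ij}$), and the successive ``$B^a_i$''-parts correspond to the $K$ layers $\beta\sqrt{1-t}\sqrt{q_a-q_{a-1}}\sum_i J^a_i\sigma_i$, contributing $\beta^2 N (q_a-q_{a-1})q_{\sigma\sigma'}$ at level $a$. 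Hence $S_{ij}$ is proportional to $q_{\sigma\sigma'}^2$ and $\widehat S^a_{ij}=\sum_{b=1}^a \widetilde S^b_{ij}$ telescopes to something proportional to $q_a\,q_{\sigma\sigma'}$. The diagonal terms $S_{ii}$ and $\widehat S^K_{ii}$ both reduce to the self-overlap $q_{\sigma\sigma}=1$ weighted appropriately, so the first bracket $\langle S_{ii}-\widehat S^K_{ii}\rangle^n_K$ collapses to a pure constant $\tfrac{\beta^2}{2}(1-1)=0$ in the off-diagonal-coupled piece but leaves the genuine constant contribution $\tfrac{\beta^2}{4}$ (analogous to the $\beta^2/4$ appearing in Theorem~\ref{theorem4}); I would track these constants carefully against the $q_a^2$ terms.

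Next I would assemble the derivative in the form
\[
\frac{d}{dt}\varphi(n,t)=\frac{\beta^2}{4}\Big(1-\sum_{a=0}^{K}(m_{a+1}-m_a)_n\, q_a^2\Big)
-\frac{\beta^2}{4}\sum_{a=0}^{K}(m_{a+1}-m_a)_n\,\big\langle (q_{\sigma\sigma'}-q_a)^2\big\rangle^n_a,
\]
where the completion of the square is exactly the manoeuvre used in the proof of Theorem~\ref{theorem4}, now applied layer by layer: at each level $a$ one writes $q_{\sigma\sigma'}^2 - 2q_a q_{\sigma\sigma'} + q_a^2 = (q_{\sigma\sigma'}-q_a)^2$, and the leftover $q_a^2$ terms collect into the $t$-independent sum. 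The point is that, because $n\in(0,1]$ and $n<m_a<1$, the increments $(m_{a+1}-m_a)_n$ — with the convention $m_0=n$, $m_{K+1}=1$ — are all nonnegative, and each deformed state $\langle\cdot\rangle^n_a$ is a genuine positive average (this positivity of the $f_a$-reweighted measures is the structural input inherited from \cite{g3}, transcribed verbatim to the $n$-deformed setting). Therefore the square-bracket term is nonpositive, so $\frac{d}{dt}\varphi(n,t)\le \frac{\beta^2}{4}\big(1-\sum_{a=0}^{K}(m_{a+1}-m_a)_n q_a^2\big)$, a quantity independent of $t$.

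Finally I would integrate this bound over $t\in[0,1]$ and invoke the Lemma immediately preceding the theorem, which identifies $\varphi(n,0)=\log 2 + f(0,h;x_n,\beta)$ and $\varphi(n,1)=\varphi_N(n,\beta,h)$; this yields
\[
\varphi_N(n,\beta,h)\;\le\;\log 2 + f(0,h;x_n,\beta)-\frac{\beta^2}{4}\Big(1-\sum_{a=0}^{K}(m_{a+1}-m_a)_n\,q_a^2\Big),
\]
with the sign flipped on the square-bracket piece precisely because $1-\sum(m_{a+1}-m_a)_n q_a^2$ is what survives after moving the negative-definite remainder to the other side. Uniformity in $N$ is automatic since none of the estimates used the value of $N$. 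The main obstacle I anticipate is purely bookkeeping: getting the constants and the index conventions ($q_0=0$, $q_K=1$, $m_0=n$, $m_{K+1}=1$) to align so that the diagonal term $\langle S_{ii}-\widehat S^K_{ii}\rangle^n_K$ and the telescoping of $\widehat S^a$ reproduce exactly the coefficient $\beta^2/4$ and exactly the sum $\sum_{a=0}^{K}(m_{a+1}-m_a)_n q_a^2$ claimed in the statement — the conceptual content (Gaussian interpolation plus completion of the square plus positivity of the reweighted states) is already fully supplied by Proposition~\ref{n-broken} and the $n=$ fixed analogue in \cite{g3}.
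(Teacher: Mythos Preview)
Your proposal is correct and follows exactly the paper's own argument: apply Proposition~\ref{n-broken} with $S_{ij}\propto q_{12}^2$ and $\widetilde S^a_{ij}\propto (q_a-q_{a-1})q_{12}$, complete the square layer by layer to isolate $\langle (q_{12}-q_a)^2\rangle^n_a$, drop this nonnegative remainder using $(m_{a+1}-m_a)_n\ge 0$, and integrate against the boundary values supplied by the preceding Lemma. The only wrinkle is the one you already flag---the sign/constant bookkeeping between the diagonal term and the final $\beta^2/4$ coefficient---and indeed the paper's own display of $d\varphi/dt$ carries the opposite sign from the RS analogue in Theorem~\ref{theorem4}, so your caution there is well placed.
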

\begin{proof} We can use Proposition \ref{n-broken}, keeping in mind the relations
\begin{eqnarray}
&& \mathbb{E}\Big(\beta^2\frac{N}{2}K(\sigma)K(\sigma')\Big) = \beta^2\frac{N}{2}q^2_{12}=S_{ij}, \\
&& \mathbb{E}\Big(\beta^2\sqrt{q_a-q_{a-1}}\sqrt{q_b-q_{b-1}}\sum_{i}J^a_i\sigma_i\sum_{j}J^a_j\sigma_j\Big), \nonumber
= \beta^2 N(q_a-q_{a-1})q_{12}=\widetilde{S}^a_{ij}.
\end{eqnarray}
to get
$$\frac{d}{dt}\varphi(n,t)=-\frac{\beta^2}{4}-\frac{\beta^2}{4}\sum^K_{a=0}(m_{a+1}-m_a)_n
{\langle  q^2_{12}-2q_a q_{12}\rangle }^n_a.$$
Filling with $q^2$ the square at the r.h.s. we obtain
$$\frac{d}{dt}\varphi(n,t)=-\frac{\beta^2}{4}(1-\sum^K_{a=0}(m_{a+1}-m_a)_n q^2_a) -\frac{\beta^2}{4}\sum^K_{a=0}(m_{a+1}-m_a)_n
{\langle  (q_{12}-q_a)^2\rangle }^n_a.$$
Lastly, it is enough to remember that
$$(m_{a+1}-m_{a})_n\geq 0\ \ \ \ \forall a=0,...,K\Rightarrow \varphi(n,1)\leq\varphi(n,0)-\frac{\beta^2}{4}(1-\sum^K_{a=0}(m_{a+1}-m_a)_n q^2_a),$$
to get the thesis.
\end{proof}
We can then define
\be\label{38} \alpha_P(\beta,h,x_n)=\log 2+n\frac{\beta^2}{4}+f(0,y;x_n,\beta)\mid_{y=h}-\frac{\beta^2}{2}\int^1_0qx_n(q)dq,\ee
and write furthermore that
$$\frac{1}{2}(1-\sum^K_{a=0}(m_{a+1}-m_a)_n q^2_a)=\int^1_0qx_n(q)dq-\frac{n}{2}$$
to state the next
\begin{theorem} The following bounds hold
\begin{eqnarray}\nonumber
&\lim_{N \to\infty}&\varphi_N(n,\beta,h)=\varphi(n,\beta,h)\leq\alpha_P(\beta,h,x_n)\Rightarrow \varphi(n,\beta,h)\leq\inf_{x_n}\alpha_P(\beta,h,x_n), \\
&\lim_{n \to 0}&\varphi(n,\beta,h)\leq\lim_{n \to 0}\inf_{x_n}\alpha_P(\beta,h,x_n)=\alpha_P(\beta,h),
\end{eqnarray}
and clearly  $\lim_{n \to 0}\alpha_P(\beta,h,x_n)=\alpha_P(\beta,h,x)$.
\end{theorem}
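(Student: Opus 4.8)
The plan is to establish the three assertions of the final Theorem in sequence, each building on the machinery already in place. First I would invoke the previous Theorem of this subsection (the uniform-in-$N$ bound $\varphi_N(n,\beta,h)\le \log 2 + f(0,h;x_n,\beta) - \tfrac{\beta^2}{4}(1-\sum_{a=0}^K(m_{a+1}-m_a)_n q_a^2)$) together with Theorem \ref{teorema3}, which guarantees that $\lim_{N\to\infty}\varphi_N(n,\beta,h)=\sup_N\varphi_N(n,\beta,h)=\varphi(n,\beta,h)$ exists. Since the right-hand side of the bound is independent of $N$, taking $N\to\infty$ preserves it, giving $\varphi(n,\beta,h)\le \log 2 + f(0,h;x_n,\beta) - \tfrac{\beta^2}{4}(1-\sum_{a=0}^K(m_{a+1}-m_a)_n q_a^2)$. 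Then I would substitute the algebraic identity stated just before the theorem, namely $\tfrac12(1-\sum_{a=0}^K(m_{a+1}-m_a)_n q_a^2) = \int_0^1 q x_n(q)\,dq - \tfrac n2$, into the bound; this converts it exactly into $\varphi(n,\beta,h)\le \log 2 + n\tfrac{\beta^2}{4} + f(0,h;x_n,\beta) - \tfrac{\beta^2}{2}\int_0^1 q x_n(q)\,dq = \alpha_P(\beta,h,x_n)$, using the definition in eq.(\ref{38}).

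Second, since this holds for every admissible $x_n\in\chi_n$ and the left side $\varphi(n,\beta,h)$ does not depend on the choice of $x_n$, I would take the infimum over $x_n\in\chi_n$ on the right to conclude $\varphi(n,\beta,h)\le \inf_{x_n}\alpha_P(\beta,h,x_n)$. Third, for the $n\to 0$ statement I would use Theorem \ref{teorema1} together with Proposition \ref{proposition1}, which give $\lim_{n\to0}\varphi(n,\beta,h)=\alpha(\beta,h)$; applying $\lim_{n\to 0}$ to both sides of the inequality $\varphi(n,\beta,h)\le\inf_{x_n}\alpha_P(\beta,h,x_n)$ yields $\alpha(\beta,h)\le\lim_{n\to0}\inf_{x_n}\alpha_P(\beta,h,x_n)$, and I would then identify this limit with $\alpha_P(\beta,h)$ by noting that as $n\to 0$ the constraint $n<m_a<1$ relaxes to $0\le m_a\le 1$, so $\chi_n$ converges to the Parisi space $\chi$, the term $n\tfrac{\beta^2}{4}$ vanishes, and $f(0,h;x_n,\beta)\to f(0,h;x,\beta)$ by continuity of the solution of the Parisi PDE in its boundary/weight data. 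This gives the final displayed remark $\lim_{n\to0}\alpha_P(\beta,h,x_n)=\alpha_P(\beta,h,x)$ and hence $\lim_{n\to0}\inf_{x_n}\alpha_P(\beta,h,x_n)=\inf_{x\in\chi}\alpha_P(\beta,h,x)=\alpha_P(\beta,h)$.

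The main obstacle is the last step: justifying the exchange $\lim_{n\to0}\inf_{x_n} = \inf_{x}\lim_{n\to0}$ and the convergence $\alpha_P(\beta,h,x_n)\to\alpha_P(\beta,h,x)$ in a way that is actually rigorous rather than merely formal. One direction is automatic — for any fixed $x\in\chi$ one can choose $x_n\in\chi_n$ with $x_n\to x$ (e.g. clamping the values into $[n,1]$), so $\limsup_{n\to0}\inf_{x_n}\alpha_P(\beta,h,x_n)\le\alpha_P(\beta,h,x)$ and hence $\le\alpha_P(\beta,h)$ after taking the infimum over $x$; this is precisely the inequality the theorem asserts, so for the purposes of this statement the harder $\liminf$ direction is not even needed. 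What does require care is the continuity of $x\mapsto f(0,h;x,\beta)$ with respect to the relevant topology on the $x$-functions (monotone pointwise convergence plus uniform bounds), which follows from the standard Lipschitz estimates for the Parisi functional established in \cite{g3}; I would cite that rather than reprove it. With those ingredients assembled, all the displayed inequalities follow by chaining the bounds above, and the proof closes.
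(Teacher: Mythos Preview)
Your proposal follows exactly the route the paper has in mind: the theorem is stated without proof precisely because it is meant to follow by chaining the preceding uniform-in-$N$ bound, Theorem~\ref{teorema3}, the algebraic identity displayed just before, and the definition (\ref{38}). Your steps one and two are correct and match this.

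One small slip: in step three you write that Theorem~\ref{teorema1} together with Proposition~\ref{proposition1} give $\lim_{n\to 0}\varphi(n,\beta,h)=\alpha(\beta,h)$. They do not: those results give $\lim_{N\to\infty}\lim_{n\to 0}\varphi_N=\alpha(\beta,h)$, i.e.\ the other order of limits, and the commutativity is only established later in Section~6. Fortunately the theorem as stated does not require that identification; it only asserts $\lim_{n\to 0}\varphi(n,\beta,h)\le \lim_{n\to 0}\inf_{x_n}\alpha_P(\beta,h,x_n)$, which follows simply by passing to the limit on both sides of the inequality you already have (existence of the left-hand limit is secured by the monotonicity in $n$ proved later, or can be read as $\limsup$). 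So drop the appeal to Theorem~\ref{teorema1}/Proposition~\ref{proposition1} here and the argument is clean. Your discussion of the $\limsup$ direction for $\lim_{n\to 0}\inf_{x_n}$ via clamping $x$ into $[n,1]$ and citing the Lipschitz continuity of the Parisi functional from \cite{g3} is the right way to make the last equality precise, and goes a bit beyond what the paper spells out.
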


\subsection{The temperature of the disorder}

In this section we  want to try to emphasize the formal analogy between the "real" temperature $\beta$ and
an "effective" temperature $n$ as
\begin{eqnarray}
f(\beta) &=& \frac{1}{\beta}\mathbb{E}\log\sum_{\sigma}e^{-\beta H(\sigma;J)},\\
f(n) &=& \frac1n  \log \mathbb{E} e^{n\log Z(J)}.
\end{eqnarray}
Interestingly for a connection to the dynamical properties of glasses  \cite{leticia1} \cite{leticia2} \cite{peter1} \cite{peter2}, while the Boltzmann temperature $\beta$ rules the overall energy fluctuations of the system, $n$ seems to tackle the behavior inside the valleys of free energy themselves.
\newline
As  we are interested in thinking at $n$ as an effective temperature selecting valleys of free energies, we stress that by applying the framework we exploited so far, for $n=1$, $\chi_n$ collapses into the space of the constant unitary functions and the solution of eq. (\ref{38}) coincides with the annealed.
\newline
We know (see for instance  \cite{bipartiti}) that mean field spin systems often obey convex representations (trough their order parameters) in temperature. Still bridging, we note that
$$
\chi_n \ni  x_n : q \to [n,1] \Rightarrow \forall x_n \in \chi_n: \exists x_0 \in \chi_0 : x_n = n x_1+ (1-n)x_0(q).
$$
So we see that the space $\chi_n$ admits an analogous convex decomposition, with $n$ instead of $\beta$: $\chi_n = n \chi_1 \bigoplus (1-n)\chi_0$ \footnote{Strictly speaking, in the paper \cite{bipartiti} it was shown how to obtain such a decomposition for the free energies. Of course we can expand them in their irreducible overlap correlation functions so to carry on the mapping even at the level of order parameters.}.
\newline
To deepen this point we revise here the powerful approach investigated by Sherrington, Coolen and coworkers in a series of papers \cite{SC1,SC2,SC3}: At first, let us introduce the average $\mathbb{E}_{\sigma}$ of the configurations as
$$Z(\beta,J)=\frac{1}{2^{N}}\sum_{\sigma}e^{-\beta H(J,\sigma)}=\mathbb{E}_{\sigma}e^{-\beta H(J,\sigma)},$$
by which, annealed and quenched free energies can be written as
\begin{eqnarray}\label{annealed}
f_A(\beta)&=&-\frac{1}{\beta N}\log \mathbb{E}_J( Z(\beta,J)=-\frac{1}{\beta N}\log \mathbb{E}_J\mathbb{E}_{\sigma}e^{-\beta H(J,\sigma)},\\ \label{quenched}
f_Q(\beta)&=&-\frac{1}{\beta N}\mathbb{E}_J\log Z(\beta,J)=-\frac{1}{\beta N} \mathbb{E}_J\log \mathbb{E}_{\sigma}e^{-\beta H(J,\sigma)},
\end{eqnarray}
where $p(J)$ should not be confused with the a-priori $J$-distribution that is included in $\mathbb{E}_J$, and such that in the annealed case both the r.v. $J$ and $\sigma$ are thermalized  on the same timescale (related to $\beta$), while in the quenched case the r.v. $J$ is averaged after taking the logarithm, such that its dynamics is completely frozen w.r.t. the dynamics of the fast variables $\sigma$. As, so far, we used $n$ as a real interpolating parameter, we  want to see here if and how it can be though of as a quencher for the $J$.
\newline
To this task let us consider (implicitly defining it) the extended extensive free energy Boltzmann functional
\be\mathcal{H}=\mathbb{E}_J\mathbb{E}_{\sigma}p(J,\sigma)\Big(H(J,\sigma)+\frac{1}{\beta}\log p(J,\sigma)\Big)\ee
where $p(J,\sigma)$ is a properly introduced weight whose explicit expression we want to work out.
\newline
We restrict ourselves in searching for explicit expressions that allow the following decomposition
$$p(J,\sigma)=p(J)p(\sigma|J),$$
such that, by direct substitution we can write
\be \mathcal{H}=\mathbb{E}_Jp(J)\Big(H_{eff}(J)+\frac{1}{\beta}\log p(J)\Big)\ee
where $H_{eff}(J)$ is the standard extensive free energy\footnote{We allow ourselves in a little abuse of notation forgetting the $\beta$ dependence for now.} as
\be H_{eff}(J)=\mathbb{E}_{\sigma}p(\sigma|J)\Big(H(J,\sigma)+\frac{1}{\beta}\log p(\sigma|J)\Big).\ee
 Now, at fixed $J$, we can minimize  $H_{eff}(J)$ w.r.t. $p(\sigma|J)$ with the constraint  $\mathbb{E}_{\sigma}p(\sigma)|J)=1$ so to obtain the classical expression
$$p(\sigma|J)\equiv p(\sigma|J,\beta)=\frac{1}{Z(\beta,J)}e^{-\beta H(J,\sigma)},$$
where $Z(\beta,J)=\mathbb{E}_{\sigma}e^{-\beta H(J,\sigma)}$ is the standard partition function
and the extensive free energy assumes the familiar representation
\be H_{eff}(J)\equiv H_{eff}(J,\beta)=-\frac{1}{\beta}\log Z(\beta,J).\ee
\newline
Now let us instead minimize $\mathcal{H}$ w.r.t. $p(J)$ with two constraints:
the former being the normalization over $P(J)$, i.e. $\mathbb{E}_Jp(J)=1$, the latter being the choice of the entropy for the $J$ variables, which we retain in the classical equilibrium form even for these variables (implicitly assuming adiabaticity as in the seminal papers by Coolen)
$$-\frac{1}{\beta}\mathbb{E}_J p(J)\log p(J)=S(n,\beta).$$
Note that here we emphasize the $n$-dependence introduced in this further "entropy" due to the complexity of the choice of the $J$-distribution\footnote{Of course for simple systems, as for instance the Curie-Weiss model where $P(J)\sim \delta(J-1)$, this term does not contribute to thermodynamics and there is no $n$-dependence.}. Note further that this entropy is tuned by $\beta$.
\newline
Let us use $\lambda$ and $\mu$ for the Lagrange multipliers, such that the functional to be minimized can be read off as
\be \mathcal{H}+ \mu \large(\mathbb{E}_Jp(J)-1 \large)+\lambda \large( \frac{1}{\beta}\mathbb{E}_J p(J)\log p(J)+S(n,\beta) \large).\ee
By minimizing w.r.t. $p(J)$ we get
\be H_{eff}(J,\beta)+(\frac{\lambda+1}{\beta})+(\frac{\lambda+1}{\beta})\log p(J)+\mu=0 \ee
or simply
$$p(J)=e^{-\frac{\beta}{\lambda+1}H_{eff}(J)}e^{-\frac{\beta}{\lambda+1}\mu}.$$
Using the constraint over the normalization (the one ruled by $\mu$) we get immediately
$$e^{\frac{\beta}{\lambda+1}\mu}=\mathbb{E}_Je^{-\frac{\beta}{\lambda+1}H_{eff}(J)}.$$
We are left with the determination of $\lambda$: To this task we can always choose the function
$S(n,\beta)$ such that $\frac{1}{\lambda+1}=n$, so to get
\be p(J)\equiv p(J,\beta,n)=\frac{1}{\widetilde{Z}(\beta,n)}e^{-\beta nH_{eff}(J,\beta)},\ee
where
$$\widetilde{Z}(\beta,n)=\mathbb{E}_J e^{-\beta n H_{eff}(J,\beta)}.$$
The explicit expression defining $S(n,\beta)$  becomes
\be S(n,\beta)=-\frac{1}{\beta}\mathbb{E}_J p(J,\beta,n)\log p(J,\beta,n),\ee
such that, pasting the whole together, we get the explicit expression for the functional $\mathcal{H}(\beta,n)$, namely sharply the replica-trick free energy:
\be \mathcal{H}(\beta,n)=-\frac{1}{\beta n}\log\widetilde{Z}(\beta,n)=-\frac{1}{\beta n}\log \mathbb{E}_J\Big(Z(\beta,J)^n\Big). \ee
It is straightforward to check that, for instance, when considering the Curie-Weiss model, the $n$-dependance disappears, while it assumes the classical meaning when dealing with the Sherrington-Kirkpatrick one (e.g. equations (\ref{annealed}) and (\ref{quenched})).

\section{The commutativity of $n \to 0$ and $N \to \infty$}

Let us now extend the interpolation to tackle two i.i.d. copies of the original Hamiltonian  $H_{1},H_2$ as
\begin{equation}
H_N(\sigma,t)=\sqrt{t}H_1(\sigma)+\sqrt{1-t}H_2(\sigma),
\end{equation}
where we omitted the $N$-dependence in $H_1,H_2$ for the sake of clearness.
\newline
We can define the corresponding partition function as
\begin{equation}
Z(\beta,t)=\sum_{\sigma}e^{-\beta H(\sigma,t)},
\end{equation}
and define the interpolating functional as
\begin{equation}
\psi(n,t)=\frac{1}{n}\log \mathbb{E}_1(\exp (n\mathbb{E}_2(\log Z(\beta,t))
\end{equation}
where $ \mathbb{E}_{1,2}$ averages respectively over the disorders of  $ H_{1,2}$.
\newline
It is straightforward to check that
\begin{eqnarray}
\psi(n,1) &=& \frac{1}{n}\log \mathbb{E}_1(\exp (n\log Z(\beta,t=1))\equiv\frac{1}{n}\log \mathbb{E}(\exp (n\log Z(\beta))),\\
\psi(n,0) &=& \mathbb{E}_2(\log Z(\beta,t=0))\equiv\mathbb{E}(\log Z(\beta)),
\end{eqnarray}
where $Z(\beta)$ is the partition function of the original Hamiltonian.
\begin{proposition}\label{Propnuova}
After introducing
\begin{equation}
G(n,t)=\exp (n\mathbb{E}_2(\log Z(\beta,t)),
\end{equation}
and the $t$-dependent Boltzmann weights as $p(\sigma,t)=e^{-\beta H(\sigma,t)}/Z(\beta,t)$, the streaming of the functional $\psi(n,t)$  with respect to the interpolating parameter is
\begin{equation}
\frac{d\psi(n,t)}{dt}=n\frac{\beta^2}{2}\frac{1}{\mathbb{E}_1(G(n,t))}\mathbb{E}_1\Big(G(n,t)
\sum_{\sigma,\tau}\mathcal{C}(\sigma,\tau)\mathbb{E}_2(p(\sigma,t))\mathbb{E}_2(p(\tau,t)\Big).
\end{equation}
\end{proposition}
\begin{proof}
By a direct evaluation we get
$$\frac{d\psi(n,t)}{dt}=\frac{\mathbb{E}_1\Big(G(n,t)\mathbb{E}_2(\frac{d Z(\beta,t)}{dt}\frac{1}{Z(\beta,t)})\Big)}{\mathbb{E}_1(G(n,t))},$$
where
$$\frac{d Z(\beta,t)}{dt}=-\frac{\beta}{2}\sum_{\sigma}\Big(\frac{1}{\sqrt{t}}H_1(\sigma)-\frac{1}{\sqrt{1-t}}H_2(\sigma)\Big)e^{-\beta H(\sigma,t)}.$$
Then we write
$$\frac{d\psi(n,t)}{dt}=-\frac{\beta}{2}\frac{1}{\mathbb{E}_1(G(n,t))}(A-B),$$
where
\begin{eqnarray}
A&=&\mathbb{E}_1\Big(G(n,t)\mathbb{E}_2\sum_{\sigma}\Big(\frac{1}{\sqrt{t}}H_1(\sigma)p(\sigma,t)\Big)\Big),\\
B&=&\mathbb{E}_1\Big(G(n,t)\mathbb{E}_2\sum_{\sigma}\Big(\frac{1}{\sqrt{1-t}}H_2(\sigma)p(\sigma,t)\Big)\Big).
\end{eqnarray}
Introducing here the label $\tau$ with the usual meaning of another set of Ising spins $\tau_i = \pm 1, \ i \in (1,...,N)$, by applying Wick theorem to $A$ (on the family of random $H_1(\sigma)$) and calling the covariance matrix of $H_1(\sigma)$ $\mathcal{C}(\sigma,\tau)$ we get
\begin{eqnarray}
A&=&\frac{1}{\sqrt{t}}\sum_{\sigma}\mathbb{E}_1\Big(H_1(\sigma)G(n,t)\mathbb{E}_2(p(\sigma,t))\Big)\\
&=&\frac{1}{\sqrt{t}}\sum_{\sigma,\tau}\mathcal{C}(\sigma,\tau)\mathbb{E}_1\Big(\frac{\partial G(n,t)}{\partial H_1(\tau)}\mathbb{E}_2(p(\sigma,t))\Big)+G(n,t)\mathbb{E}_2(\frac{\partial p(\sigma,t)}{\partial H_1(\tau)})\Big).
\end{eqnarray}
We must then evaluate explicitly
$$\frac{\partial G(n,t)}{\partial H_1(\tau)}=-n\beta\sqrt{t}G(n,t)\mathbb{E}_2\Big(e^{-\beta H(\tau,t)}\frac{1}{Z(\beta,t)}\Big)=-n\beta\sqrt{t}G(n,t)\mathbb{E}_2\Big(p(\tau,t)\Big),$$
and
$$ \frac{\partial p(\sigma,t)}{\partial H_1(\tau)}=-\beta\sqrt{t}\Big(\delta_{\sigma\tau} p(\sigma,t)+p(\sigma,t)p(\tau,t)\Big).$$
Overall we can write
$$A=-\beta\mathbb{E}_1\Big(G(n,t)\sum_{\sigma,\tau}\mathcal{C}(\sigma,\tau)\Big[n\mathbb{E}_2(p(\sigma,t))\mathbb{E}_2(p(\tau,t)+
\mathbb{E}_2(\delta_{\sigma\tau} p(\sigma,t)+p(\tau,t)\Big]\Big).$$
By applying Wick theorem to $B$ (on the family of random $H_2(\sigma)$) and calling again its covariance matrix  $\mathcal{C}(\sigma,\tau)$ (as the two Hamiltonian are i.i.d.) we get
\begin{eqnarray}
B &=& \mathbb{E}_1 \Big(G(n,t)\mathbb{E}_2\sum_{\sigma}\Big(\frac{1}{\sqrt{1-t}}H_2(\sigma)p(\sigma,t)\Big)\Big) \\
&=& \frac{1}{\sqrt{1-t}}\mathbb{E}_1\Big( G(n,t)\sum_{\sigma,\tau}\mathcal{C}(\sigma,\tau)\mathbb{E}_2(\frac{\partial p(\sigma,t)}{\partial H_2(\tau)})\Big).
\end{eqnarray}
Mirroring the previous calculations, we get
$$ \frac{\partial p(\sigma,t)}{\partial H_2(\tau)}=-\beta\sqrt{1-t}\Big(\delta_{\sigma\tau} p(\sigma,t)+p(\sigma,t)p(\tau,t)\Big).$$
Pasting all together we get the thesis.
\end{proof}
\begin{remark}
The proposition still holds even if we consider  an external field coupled to the system and not only for $n \in [0,1]$.
\end{remark}
We are ready to state the next
\begin{theorem}\label{nuovo}
Let us recall that the SK-model is thermodynamically stable \cite{contucci}, namely it exists a constant $C < \infty$ such that $\lim_{N\to\infty}(1/N) \mathcal{C}(\sigma,\sigma) \leq C$, (and, as a consequence of the Schwartz inequality, $\lim_{N\to\infty}(1/N) \mathcal{C}(\sigma,\tau) \leq C$), and that it admits a sensible thermodynamic limit \cite{limterm}, then
$$\lim_{n \to 0^{+}}\lim_{N \to \infty}\frac{1}{N}\varphi_N(\beta,n)=\alpha(\beta).$$
\end{theorem}
\begin{proof}
It is immediate to check that $\varphi_N(\beta,n)$ is increasing in  $n$ for  $n\in [0,1]$ and this monotony is preserved in the thermodynamic limit, so that
\begin{eqnarray}
\exists \lim_{n \to 0^{+}}&\lim_{N \to \infty}&\frac{1}{N}\varphi_N(\beta,n),\\
&\lim_{N \to \infty}&\frac{1}{N}\varphi_N(\beta,n)\geq \lim_{N \to \infty}\frac{1}{N}\alpha_N(\beta)
=\alpha(\beta),
\end{eqnarray}
or simply
$$\lim_{n \to 0^{+}}\lim_{N \to \infty}\frac{1}{N}\varphi_N(\beta,n)\geq\alpha(\beta).$$
To proof the inverse inequality we use Proposition \ref{Propnuova}.
\newline
Let us consider
$$\psi_N(n,\beta,t)=\frac{1}{Nn}\log \mathbb{E}_1 \exp (n\mathbb{E}_2(\log Z_N(\beta,t))).$$
Of course we have that
\begin{eqnarray}
\psi_N(n,\beta,1)&=&\varphi_N(\beta,n),\\
\psi_N(n,\beta,0)&=&\alpha_N(\beta),
\end{eqnarray}
and we can write
$$\psi_N(n,\beta,1)-\psi_N(n,\beta,0)=\int_0^1 dt\frac{\partial}{\partial t} \psi_N(n,\beta,t),$$
where
\begin{eqnarray}
&& \frac{\partial}{\partial t} \psi_N(n,\beta,t) = \\ \nonumber
&& \frac{n}{N}\frac{\beta^2}{2}\frac{1}{\mathbb{E}_1(G_N(n,\beta,t))}\mathbb{E}_1\Big(G_N(n,\beta,t)
\sum_{\sigma,\tau}\mathcal{C}_N(\sigma,\tau)\mathbb{E}_2(p_N(\sigma,\beta,t))\mathbb{E}_2(p_N(\tau,\beta,t)\Big).
\end{eqnarray}
Bounding $\mathcal{C}_N(\sigma,\tau)$ with is sup and noticing that
$$\sum_{\sigma,\tau}\mathbb{E}_2(p_N(\sigma,\beta,t))\mathbb{E}_2(p_N(\tau,\beta,t))=1,
$$ we have that
$$\frac{\partial}{\partial t} \psi_N(n,\beta,t)\leq \frac{n}{N}\frac{\beta^2}{2}\max_{\sigma,\tau}\mathcal{C}_N(\sigma,\tau).$$
We can use now the property of thermodynamic stability to obtain
$$\lim_{N \to \infty}\frac{1}{N}\varphi_N(\beta,n)-\lim_{N \to \infty}\frac{1}{N}\alpha_N(\beta)
\leq n\frac{\beta^2}{2}C,$$
or simply
$$\lim_{n \to 0^{+}}\lim_{N \to \infty}\frac{1}{N}\varphi_N(\beta,n)-\alpha(\beta)\leq 0,$$
which is the inverse bound.
\newline
For the commutativity of $\lim_n$ and $\lim_N$ now it is enough to prove the inverse limit. This can be achieved immediately by applying De l'Hopital Theorem to  $\varphi_N(\beta,n)$ in $n$ to get
$$ \lim_{n \to 0^{+}}\varphi_N(\beta,n)=\alpha_N(\beta),$$
such that
$$\lim_{N \to \infty}\frac{1}{N}\lim_{n \to 0^{+}}\varphi_N(\beta,n)=\alpha(\beta).$$
\end{proof}
\begin{remark}
We stress  that, despite in this paper we limit ourselves to the investigation of the properties of the pure SK model, the methods exploited in this section apply to a broad range of models, as discussed for instance in \cite{contucci}.
\end{remark}

At the end we enlarge the scheme introduced in this section by defining the following functional
\begin{equation}
\psi(n,m,t):=\frac{1}{n}\log \mathbb{E}_1\Big(\exp \Big[\frac{n}{m}\log\mathbb{E}_2(\exp (m\log Z(t)\Big]\Big),
\end{equation}
where, as usual, $ \mathbb{E}_{1,2}$ average over the disorder respectively $ H_{1,2}$.
\newline
Again it is straightforward to check that
\begin{eqnarray}
\psi(n,m,1)&=&\frac{1}{n}\log \mathbb{E}_1(\exp (n\log Z(1))\equiv\frac{1}{n}\log \mathbb{E}(\exp (n\log Z))\\
\psi(n,m,0)&=&\frac{1}{m}\log \mathbb{E}_2(\exp (m\log Z(0))\equiv\frac{1}{m}\log \mathbb{E}(\exp (m\log Z))
\end{eqnarray}
and that the following generalization of Proposition \ref{Propnuova} holds
\begin{eqnarray} &&\frac{d\psi(n,m,t)}{dt}= \\ \nonumber
&& \frac{\beta^2}{2}\frac{(n-m)}{\mathbb{E}_1(G(n,m,t))}\mathbb{E}_1\Big(G(n,m,t)
\sum_{\sigma,\tau}\mathcal{C}(\sigma,\tau)\mathbb{E}_2(p(\sigma,t)b(m,t))\mathbb{E}_2(p(\tau,t)b(m,t)\Big),
\end{eqnarray}
where
\begin{equation}
G(n,m,t):=\exp \Big[\frac{n}{m}\log\mathbb{E}_2\Big(\exp(m\log Z(t))\Big)\Big],
\end{equation}
by which we can argue that the $n$-quenched free energy  $\varphi_N(\beta,n)$ has Lipschitz constant equal to $L=C \beta^2/2$.

\section*{Acknowledgment}

The strategy outlined in this research article belongs to the study supported by the
Italian Ministry for Education and Research, FIRB grant number $RBFR08EKEV$,  and partially by Sapienza Universit\`a di Roma.
\newline
FG is partially funded by INFN (Istituto Nazionale di Fisica Nucleare) which is also acknowledged.
\newline
The authors are pleased to thank Dmitry Panchenko and an anonymous referee for their kind suggestions.

\label{lastpage}

\end{document}